\newtheorem{fact}{Fact}
\newtheorem{theorem}{Theorem}
\newtheorem{lemma}{Lemma}
\newtheorem{corollary}{Corollary}
\newtheorem{property}{Property}
\newtheorem{definition}{Definition}
\newtheorem{remark}{Remark}
\newenvironment{proof}{{\noindent\it Proof}\quad}{\hfill $\square$\par}
\newtheorem{example}{Example}
\newcommand{\Z}{\ensuremath{\mathbb Z}}
\newcommand{\R}{\ensuremath{\mathbb R}}
\newcommand{\C}{\ensuremath{\mathbb C}}
\newcommand{\ls}[1]
{\dimen0=\fontdimen6\the\font\lineskip=#1\dimen0
	\advance\lineskip.5\fontdimen5\the\font
	\advance\lineskip-\dimen0
	\lineskiplimit=0.9\lineskip
	\baselineskip=\lineskip
	\advance\baselineskip\dimen0
	\normallineskip\lineskip\normallineskiplimit\lineskiplimit
	\normalbaselineskip\baselineskip
	\ignorespaces}
\begin{document}
	
	\bibliographystyle{abbrv}
	
	\title{Non-standard Golay Complementary Sequence Pair over QAM}
\renewcommand{\thefootnote}{}
\footnotetext{
	The authors are supported in part by the National Natural Science Foundation of China under Grant 62172319 and Grant U19B2021; in part by the National Key Research and Development Program under Grant 2021YFA000503.
	The material in this paper was  presented in part at The 10th International Workshop on Signal Design and its Applications in Communications (IWSDA'2022).}
\author{Erzhong Xue$^{1}$, Zilong Wang$^1$,  Guang Gong$^2$\\
	\small $^1$ State Key Laboratory of Integrated Service Networks, Xidian University \\[-0.8ex]
	\small Xi'an, 710071, China\\
	\small $^2$Department of Electrical and Computer Engineering, University of Waterloo \\
	\small Waterloo, Ontario N2L 3G1, Canada  \\
	\small\tt 2524384374@qq.com, zlwang@xidian.edu.cn,  ggong@uwaterloo.ca\\
}
	
	\maketitle

	\thispagestyle{plain} \setcounter{page}{1}

\ls{1.5}

\begin{abstract}
We generalize the three-stage process for constructing and enumerating Golay array and sequence pairs given in 2008 by Frank Fiedler {\em et al.} [A multi-dimensional approach to the construction and enumeration of Golay complementary sequences, Journal of Combinatorial Theory, Series A 115 (2008) 753–776] to $4^{q}$-QAM constellation based on para-unitary matrix method, which partly solves their open questions.
Our work not only includes the main part of known results of Golay complementary sequences over $4^{q}$-QAM based on Boolean functions and standard Golay sequence pairs over QPSK, but also generates new Golay complementary arrays (sequences) over $4^{q}$-QAM based on non-standard Golay array pairs over QPSK.
\end{abstract}
{\bf Keywords: } Golay complementary sequence, Golay array, QAM, Para-unitary matrix.

\section{Introduction}

Golay complementary sequence pairs  have found application in many areas of digital information processing since their introduction by Golay \cite{Golay1961Complementary}, 
especially in power control for multicarrier wireless transmission. Based on different modulation, the elements of designed sequences should lie in the different alphabet. 
For example, $H$-ary and  $4^q $-QAM  sequences are employed in phase shift keying (PSK)  modulation and  quadrature amplitude modulation (QAM) respectively.

A milestone work on the construction of  $H$-ary Golay sequence pairs was made   by Davis and Jedwab \cite{Davis1999Peak} in 1999, where an explicit algebraic normal form for  a set of $(H=2^h)$-phase Golay sequence pairs of length $2^m$ was given. Paterson \cite{Paterson00} showed the same construction holds without modification for any even $H$.
Golay sequences of length $2^m$ with this algebraic normal form have 
been referred to as the {\em standard} since then. The {\em non-standard} quaternary Golay sequences were found by computer search \cite{Li05}, and were explained shortly afterwards by  means of \lq\lq cross-over\rq\rq\ of standard Golay sequence pairs with shared auto-correlation function \cite{Fiedler06}. Fiedler, Jedwab and Parker \cite{Array2} introduced the  view of a Golay sequence pair as the projection of a multi-dimensional Golay array pair, and gave a three-stage process to construct Golay sequence pairs, which successfully recover all the standard Golay complementary sequence pairs by using trivial Golay pairs of
length 1 as inputs, and find  new infinite families of Golay  pairs by using non-standard seed Golay pairs as additional inputs. Later on, new source of seed pairs for $6$-ary Golay sequences of length $2^m$ was found and new  families of Golay  pairs by using these Golay pairs as  additional inputs was shown in \cite{newseed}. Thus, one only need to find
$H$-ary seed pairs to construct   $H$-ary Golay  pairs from the view of this three-stage process.

Since QAM are widely employed in the high rate wireless transmissions,  $16$-QAM and $64$-QAM Golay sequence pairs of length $2^{m}$ were constructed in \cite{Robing2001A,Chong2003A,Lee2006A,Li2008Comments,Chang2010New}. Those sequences were generalized to $4^q$-QAM and represented as the weighted-sum of selected standard quaternary Golay sequences in five cases by Li  \cite{Li2010A} and Liu  \cite{Liu2013New}. In addition,  Li~\cite{Li2008Comments} found out that non-standard quaternary Golay sequences can also be combined to generate QAM Golay sequences by providing an example of 64-QAM Golay complementary pair.

Array is an important object in the study of combinatorics.
The three-stage process \cite{Fiedler06} for constructing Golay arrays and sequences is very successful  for PSK case. It is then nature to ask how can it be used to simplify or extend known results on the construction of $4^q$-QAM Golay arrays and sequences. However, this question posed in \cite{Fiedler06} is open until now, since the process to construct higher-dimensional Golay array pairs from lower-dimensional Golay array pairs \cite[Theorems 5 and 7]{Fiedler06} could not be extended from PSK case to QAM case. 

In this paper,  we demonstrate the power of the recently-introduced view \cite{Budi2018PU, fullpaper} of studying Golay array (sequence) pairs by para-unitary (PU)  matrices,  which enable us to construct higher-dimensional Golay array pairs from lower-dimensional Golay array pairs for both PSK and QAM case. Instead of directly studying a Golay array pair, we introduce the concept of a Golay array matrix. A similar three-stage process working for both PSK and QAM was proposed.

For the first stage, we propose the recursive formula for Golay array matrices which is initially in the domain of their generating functions and then converted to Golay array matrices themselves. For the PSK case, this stage is equivalent to the first stage in \cite[Theorems 5 and 7]{Array2}.
Since the recursive formula is presented in the domain of generating functions, we are able to construct higher-dimensional $4^q$-QAM Golay array pairs from lower-dimensional weighted-sum of quaternary Golay array pairs, where the weights are determined by the  factorization of integer $q$. Thus, the first stage here can be generalized to QAM case. For the second stage, we proved the affine offsets can be applied to Golay array matrices for both PSK and QAM case.
For the third stage,
we introduce the mixed radix representation \cite{fullpaper} to simplify the graph-theoretic means \cite[Proposition 2]{Array2} of tracking the effect of successive projections of Golay array pairs from higher dimension  to lower dimensions.

Note that the paper \cite{fullpaper} focuses on the algebraic normal form  of  $4^q$-QAM Golay sequences pairs which are projection of  Golay array pairs of size $2\times 2\times \cdots \times 2$, while this paper give the systematic construction based on the existing Golay sequences pairs including nonstandard pairs. The new three-stage process working for both PSK and QAM was proposed. In addition, we use  this process to recover the main part of known $4^q$-QAM Golay sequences pairs of length $2^m$ with explicit function form \cite{Robing2001A,Chong2003A,Lee2006A,
Li2010A,fullpaper}, and construct infinite family of new  $4^q$-QAM Golay sequences pairs by taking quaternary non-standard seed Golay pairs as inputs. A lower bound of the new constructed Golay pairs over QAM are also discussed.

The rest of this paper is organized as follows. In Section \ref{Sec: Definition and Notations}, we introduced Golay array (sequence) pairs and their generating functions, and extend the concepts to Golay array (sequence) matrix and PU matrices. We also introduced the projection from arrays of higher dimensions to that of lower dimensions (or sequences). We introduce function matrices and affine functions for the $H$-PSK case and $4^{q}$-QAM case.
In Section \ref{Sec: PSK GCP Based on PU}, we present the constructions of Golay array matrices over $H$-PSK and $4^{q}$-QAM based on PU method, and introduced the projected sequences.
Based on that, in Section \ref{Sec: QAM GCP and V-GBF}, we give the specific construction using  weighted sum of compatible Hadamard matrices and cross-over PU matrices. 
In Section \ref{Sec: Enumerations},
a lower bound of enumeration is given.
We conclude the paper in Section \ref{Sec: Concluding Remarks}.

\section{Preliminary}\label{Sec: Definition and Notations}
\subsection{Basic Definition and Notations}

An {\em array} of size ${b}_1\times {b}_2\cdots\times {b}_n$ is an $n$-dimensional matrix of complex-valued entries, which can be expressed by an $n$-variable function
	$$\mathcal{F}(\bm{y})=\mathcal{F}({y}_1,{y}_2,\cdots,{y}_{n}): \mathbf{Z}_{{b}_1}\times \mathbf{Z}_{{b}_2}\cdots\times \mathbf{Z}_{{b}_n}\rightarrow \C,$$
where $\mathbf{Z}_s$ means the set $\{0,1,\dots,b-1\}$, $\C$ means the complex field.
	The {\em generating function}  corresponding to the complex-valued array $\mathcal{F}(\bm{y})$ is the  (multivariate) polynomial
	\begin{equation}\label{mul-var}
		F(\bm{z})=F({z}_1, {z}_2, \cdots {z}_{n})=\sum_{{y}_1,{y}_2,\cdots, {y}_{n}}\mathcal{F}({y}_1,{y}_2,\cdots,{y}_{n})\cdot{z}_1^{{y}_1}{z}_2^{{y}_2}\cdots {z}_{n}^{{y}_{n}}.
	\end{equation}
In particular, the one dimensional array is actually a sequence which is  denoted by $\mathcal{F}(y)$ and its generating function is denoted by $F(z)$ in this paper.

The {\em aperiodic cross-correlation} of two arrays $\mathcal{F}(\bm{y})$ and $\mathcal{G}(\bm{y})$ at shift $\bm{\tau}=(\tau_1, \tau_2, \cdots \tau_{n})$ ($1-{b}_{k}\leq \tau_k\leq {b}_{k}-1$) is defined by
\[C_{\mathcal{F},\mathcal{G}}(\bm{\tau})=
\sum_{\bm{y}}{\mathcal{F}(\bm{y}+\bm{\tau})\cdot \overline{\mathcal{G}}(\bm{y})},\]
where ${F(\bm{y}+\bm{\tau})\cdot \overline{F}(\bm{y})}=0$ if either $F(\bm{y}+\bm{\tau})$ or $\overline{\mathcal{G}}(\bm{y})$ is not defined.	
	The {\em aperiodic auto-correlation} of an array $F(\bm{y})$ at shift $\bm{\tau}$ is denoted by 
	$C_{\mathcal{F}}(\bm{\tau})=C_{\mathcal{F},\mathcal{F}}(\bm{\tau})
	$.
%

\subsection{Golay Array (Sequence) Pair}
Suppose $\left\{\mathcal{F}(\bm{y}),\mathcal{G}(\bm{y})\right\}$ are a pair of $n$-dimensional arrays, whose corresponding generating functions are $\{{F}(\bm{z}),{G}(\bm{z})\}$.
\begin{definition}[Golay array (sequence) pair]\cite{Array2}\label{def: GAP}
The $n$-dimensional arrays $\mathcal{F}(\bm{y})$ and $\mathcal{G}(\bm{y})$  are called  {\em complementary}  if either of the following equivalent conditions in viewpoint from correlation functions and generating functions is satisfied. 
\begin{description}
\item[Corelation function]
\begin{equation}
{C}_{\mathcal{F}}(\bm{\tau})+{C}_{\mathcal{G}}(\bm{\tau})=0,\quad(\forall \bm{\tau}\ne \bm{0}).
\end{equation}
\item[Generating function]
\begin{equation}
F(\bm{z})\cdot \overline{F}(\bm{z}^{-1})+G(\bm{z})\cdot \overline{G}(\bm{z}^{-1})=c
\end{equation}
where $c$ is a real constant.
\end{description}
$\{\mathcal{F}(\bm{y}),\mathcal{G}(\bm{y})\}$  is called a {\em Golay complementary array pair} (or Golay array pair). 
And  either array in a Golay array pair is called a  {\em Golay array}.
In particular, if the dimension $n=1$, the Golay array pair is also called a {\em Golay complementary sequence pair (or Golay sequence pair)} and  either sequence in a Golay sequence pair is called a  {\em Golay complementary sequence (or Golay sequence)}.
\end{definition} 

 For more details on the concepts and results for Golay array pairs, see \cite{Array2, CCA}.

\subsection{Golay Array Matrix and PU Matrix}\label{Subec: Golay array matrix and PU matrix}
For $0\leq i,j \leq 1$, let $\mathcal{F}_{i,j}(\bm{y})$ be $n$-dimensional arrays and ${F}_{i,j}(\bm{z})$ their corresponding generating functions. Denote the {\em array matrix} $\bm{\mathcal{M}}(\bm{y})$ by
\begin{equation}\label{eq: array-matrix}
\bm{\mathcal{M}}({\bm{y}})=
\begin{bmatrix}
\mathcal{F}_{0,0}({\bm{y}})&\mathcal{F}_{0,1}({\bm{y}})\\
\mathcal{F}_{1,0}({\bm{y}})&\mathcal{F}_{1,1}({\bm{y}})\\
\end{bmatrix}.
\end{equation}
Denote their corresponding {\em generating matrix}  $\bm{M}(\bm{z})$ by
\begin{equation}\label{eqn: gene-matrix}
\bm{M}({\bm{z}})=
\begin{bmatrix}
{F}_{0,0}({\bm{z}})&{F}_{0,1}({\bm{z}})\\
{F}_{1,0}({\bm{z}})&{F}_{1,1}({\bm{z}})\\
\end{bmatrix}.
\end{equation}

\begin{definition}[Golay array matrix]\label{def: GAM}\cite[Def 6, Thm 1]{CCA}\label{def: CCA}
$\bm{\mathcal{M}}({\bm{y}})$ in equation (\ref{eq: array-matrix})  is called a {\em Golay array matrix} if 
either of the following equivalent conditions in viewpoint from correlation functions and generating functions  is satisfied.
\begin{description}
\item[Corelation function:] the two array pairs in each columns (and rows) of $\bm{\mathcal{M}}({\bm{y}})$ form Golay array pairs which are orthogonal with each other, i.e.,
\begin{equation}
\left\{
\begin{aligned}
&{C}_{\mathcal{F}_{0,j}}(\bm{\tau})+{C}_{\mathcal{F}_{1,j}}(\bm{\tau})=0,
\quad
{C}_{\mathcal{F}_{i,0}}(\bm{\tau})+{C}_{\mathcal{F}_{i,1}}(\bm{\tau})=0,
\quad(i,j\in\{0,1\};\ \forall \bm{\tau}\ne \bm{0}),\\
&{C}_{\mathcal{F}_{0,0},\mathcal{F}_{0,1}}(\bm{\tau})+{C}_{\mathcal{F}_{1,0},\mathcal{F}_{1,1}}(\bm{\tau})=0,
\quad
{C}_{\mathcal{F}_{0,0},\mathcal{F}_{1,0}}(\bm{\tau})+{C}_{\mathcal{F}_{0,1},\mathcal{F}_{1,1}}(\bm{\tau})=0,
\quad(\forall \bm{\tau}).
\end{aligned}\right.
\label{equation_GCP}	
\end{equation}

\item[Generating function:] $\bm{M}(\bm{z})$ is  a {\em (multivariate) para-unitary} (PU) matrix, i.e.,
\begin{equation}
	\bm{M}(\bm{z})\cdot\bm{M}^{\dagger}(\bm{z}^{-1})=c\cdot \bm{I},
\end{equation}
where  $c$ is a real number, $(\cdot)^\dagger$ denotes the Hermitian transpose, and $\bm{I}$ is the identity matrix of order $2$.
\end{description}
\end{definition}

\begin{remark}
1. The concept of PU matrix is generating function form of Golay array matrix.

2. The Golay array matrix is actually a special case of the {\em complete complementary array} studied in \cite{CCA} for the size $2\times2$.
The elements of arrays studied in \cite{CCA} are $H$-th roots of unity, while in this paper the arrays defined in Definition \ref{def: GAM} lie in complex field.

3. For the one dimensional (sequence) case, the Golay sequence matrix is also called a complementary mate pair in \cite{Tseng1972Complementary}.
\end{remark}

According to Definition \ref{def: CCA}, the array pair in each column (or row) of a Golay array matrix form a Golay array pair.
Conversely, given a Golay array pair, we can always construct a Golay array matrix by following method. 
Suppose $\mathcal{F}(\bm{y})=\mathcal{F}({y}_1,{y}_2,\cdots,{y}_{n})$ is a ${b}_1\times {b}_2\cdots\times {b}_n$ complex-valued array.	Denote  $\mathcal{F}^{*}(\bm{y})$ as its conjugate reverse array, i.e.,
\[\mathcal{F}^{*}(\bm{y})=\mathcal{F}^{*}({y}_1,{y}_2,\cdots,{y}_{n})=\overline{\mathcal{F}}({b}_{1}-1-{y}_1,{b}_{2}-1-{y}_2,\cdots,{b}_{n}-1-{y}_{n}).\]
It is easy to verify that the generating function of $ \mathcal{F}^{*}(\bm{y})$, which denoted by ${F}^{*}(\bm{z})$, is given by
\[{F}^{*}(\bm{z})={F}^{*}({z}_1,{z}_2,\cdots,{z}_{n})={z}_{1}^{{b}_1-1}{z}_{2}^{{b}_2-1}\dots {z}_{n}^{{b}_n-1}\cdot\overline{F}(\bm{z}^{-1}).\]

As \cite{Array2} pointed out, $\mathcal{F}^{*}(\bm{y})$ and $\mathcal{F}(\bm{y})$ share the same aperiodic auto-correlation functions, i.e., ${C}_{\mathcal{F}^{*}}(\bm{\tau})={C}_{\mathcal{F}}(\bm{\tau})$, which is equivalent to ${F}^{*}(\bm{z})\cdot\overline{F}^{*}(\bm{z}^{-1})={F}(\bm{z})\cdot\overline{F}(\bm{z}^{-1})$.
Similarly, it's easy to verify that ${C}_{\mathcal{F}^{*},\mathcal{G}^{*}}(\bm{\tau})={C}_{\mathcal{G},\mathcal{F}}(\bm{\tau})$, which leads to ${F}^{*}(\bm{z})\cdot\overline{G}^{*}(\bm{z}^{-1})={G}(\bm{z})\cdot\overline{F}(\bm{z}^{-1})$. 
Thus we have the following lemma.
\begin{lemma}\label{lem: GAP->GAM}
For a Golay array pair $\{\mathcal{F}(\bm{y}),\mathcal{G}(\bm{y})\}$ of size ${b}_1\times {b}_2\cdots\times {b}_n$, the array matrix given by
\begin{equation}\label{eq: lem_GAP->GAM M(y)}
	\bm{\mathcal{M}}(\bm{y})=\begin{bmatrix}
\mathcal{F}({\bm{y}})&\mathcal{G}^{*}({\bm{y}})\\
\mathcal{G}({\bm{y}})&-\mathcal{F}^{*}({\bm{y}})\\
\end{bmatrix},
\end{equation}
is a Golay array matrix.
The corresponding generating matrix
\begin{equation}\label{eq: lem_GAP->GAM M(z)}
\bm{M}(\bm{z})=\begin{bmatrix}
{F}({\bm{z}})&{G}^{*}({\bm{z}})\\
{G}({\bm{z}})&-{F}^{*}({\bm{z}})\\
\end{bmatrix}, 
\end{equation}
is a PU matrix.
\end{lemma}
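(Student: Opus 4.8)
The plan is to prove the generating-function assertion — that $\bm{M}(\bm{z})$ in (\ref{eq: lem_GAP->GAM M(z)}) is a PU matrix — since by Definition \ref{def: GAM} this is \emph{equivalent} to $\bm{\mathcal{M}}(\bm{y})$ in (\ref{eq: lem_GAP->GAM M(y)}) being a Golay array matrix, and (\ref{eq: lem_GAP->GAM M(z)}) is exactly the generating matrix of (\ref{eq: lem_GAP->GAM M(y)}) once we recall that ${F}^{*}(\bm{z})$ is the generating function of $\mathcal{F}^{*}(\bm{y})$ (and likewise for ${G}^{*}$). So nothing additional needs to be checked on the correlation side; one could instead verify the eight identities in (\ref{equation_GCP}) directly, but routing everything through generating functions is considerably shorter.

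First I would write the Hermitian transpose explicitly. Since $(\cdot)^{\dagger}$ conjugates coefficients and transposes, while the substitution $\bm{z}\mapsto\bm{z}^{-1}$ is applied afterwards,
\[
\bm{M}^{\dagger}(\bm{z}^{-1})=\begin{bmatrix}\overline{F}(\bm{z}^{-1}) & \overline{G}(\bm{z}^{-1})\\ \overline{G^{*}}(\bm{z}^{-1}) & -\,\overline{F^{*}}(\bm{z}^{-1})\end{bmatrix}.
\]
Multiplying $\bm{M}(\bm{z})\cdot\bm{M}^{\dagger}(\bm{z}^{-1})$, the two diagonal entries are $F(\bm{z})\overline{F}(\bm{z}^{-1})+G^{*}(\bm{z})\overline{G^{*}}(\bm{z}^{-1})$ and $G(\bm{z})\overline{G}(\bm{z}^{-1})+F^{*}(\bm{z})\overline{F^{*}}(\bm{z}^{-1})$, while the two off-diagonal entries are $F(\bm{z})\overline{G}(\bm{z}^{-1})-G^{*}(\bm{z})\overline{F^{*}}(\bm{z}^{-1})$ and $G(\bm{z})\overline{F}(\bm{z}^{-1})-F^{*}(\bm{z})\overline{G^{*}}(\bm{z}^{-1})$.

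For the diagonal entries I would invoke the identity ${F}^{*}(\bm{z})\,\overline{F^{*}}(\bm{z}^{-1})={F}(\bm{z})\,\overline{F}(\bm{z}^{-1})$ recorded above (which holds for every array, and which I apply here once to $\mathcal{G}$ and once to $\mathcal{F}$) to rewrite both diagonal entries as $F(\bm{z})\overline{F}(\bm{z}^{-1})+G(\bm{z})\overline{G}(\bm{z}^{-1})$; by the generating-function form of the Golay array pair hypothesis this is the real constant $c$. For the off-diagonal entries I would use the two cross-correlation identities recorded just before the lemma, namely ${F}^{*}(\bm{z})\,\overline{G^{*}}(\bm{z}^{-1})={G}(\bm{z})\,\overline{F}(\bm{z}^{-1})$ and its mirror ${G}^{*}(\bm{z})\,\overline{F^{*}}(\bm{z}^{-1})={F}(\bm{z})\,\overline{G}(\bm{z}^{-1})$ (coming from ${C}_{\mathcal{F}^{*},\mathcal{G}^{*}}={C}_{\mathcal{G},\mathcal{F}}$ and ${C}_{\mathcal{G}^{*},\mathcal{F}^{*}}={C}_{\mathcal{F},\mathcal{G}}$, respectively); each off-diagonal entry then collapses to $0$. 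Hence $\bm{M}(\bm{z})\cdot\bm{M}^{\dagger}(\bm{z}^{-1})=c\,\bm{I}$, so $\bm{M}(\bm{z})$ is PU and $\bm{\mathcal{M}}(\bm{y})$ is a Golay array matrix.

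I do not expect a genuine obstacle: the statement is essentially a bookkeeping verification. The only points requiring a little care are (i) forming $\bm{M}^{\dagger}(\bm{z}^{-1})$ correctly, keeping the coefficient-conjugation separate from the variable substitution $\bm{z}\mapsto\bm{z}^{-1}$, and (ii) applying the \emph{correct} instance of the cross-correlation identity to each off-diagonal entry, since the roles of $\mathcal{F}$ and $\mathcal{G}$ are interchanged between them. If one prefers not to rely on the pre-stated identities, the same cancellations can be obtained directly by substituting ${F}^{*}(\bm{z})={z}_1^{{b}_1-1}\cdots {z}_n^{{b}_n-1}\,\overline{F}(\bm{z}^{-1})$ (and the analogue for ${G}^{*}$) and observing that the monomial prefactors cancel in pairs.
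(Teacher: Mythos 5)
Your proof is correct and takes essentially the same route as the paper: the paper justifies the lemma by the two identities ${F}^{*}(\bm{z})\overline{F}^{*}(\bm{z}^{-1})={F}(\bm{z})\overline{F}(\bm{z}^{-1})$ and ${F}^{*}(\bm{z})\overline{G}^{*}(\bm{z}^{-1})={G}(\bm{z})\overline{F}(\bm{z}^{-1})$ stated immediately before it, and your entrywise computation of $\bm{M}(\bm{z})\cdot\bm{M}^{\dagger}(\bm{z}^{-1})=c\,\bm{I}$ is exactly the (omitted) verification using those identities plus the Golay-pair hypothesis.
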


In \cite[Theorem 5]{Array2}, Golay array pair $(\mathcal{F}({\bm{y}}), \mathcal{G}({\bm{y}}))$ and it's conjugate reverse $(\mathcal{F}^{*}({\bm{y}}),\mathcal{G}^{*}({\bm{y}}))$ are used to construct suitable Golay array pairs from lower-dimensional Golay array (or sequence) pairs, and is called generalization of Dymond’s construction \cite{Dymond1992}.
The first stage in \cite{Array2} is explicitly via repeated use of this process, whose function form results are presented in \cite[Theorem 7]{Array2}.
The process is equivalent to PU matrices method if we use \eqref{eq: lem_GAP->GAM M(z)} as ``seed" PU matrices. The details of PU matrices method are introduced in Section \ref{Sec: PSK GCP Based on PU}.

\subsection{Projection  from Arrays to that of Lower Dimensions}
In \cite{Array2}, the authors introduced the projection mappings that reduce the dimensions of an array and preserve the complementary property. 
Golay array with low dimensions or Golay sequence could be derived from Golay array with high dimensions by the iteration of the process \cite{Array1}.

In this section we introduce the concept ``Mixed radix representation'' to simplify the graph-theoretic means of tracking the effect of successive projections in \cite{Array2}.

\begin{definition}[Mixed radix system]\label{def:mixed radix}\cite{book_mixed_radix}
	Given radixes $\{{b}_{k}\ge1|1\leq{k}\leq{m}\}$. 
	Let $\pi$ be a permutation of  $\{1,2,\dots,{m}\}$.
	Define the bases $B_{k}$ ($1\leq{k}\leq{m}$) by
	\begin{equation}\label{eq: Bases_B}
	{B}_{\pi(k)}=\prod_{i=1}^{k-1}{b}_{\pi(i)},
	\end{equation}
where ${B}_{\pi(1)}=1$.
For ${y}_{k}\in\mathbf{Z}_{{b}_{k}}$ ($1\leq k\leq{m}$), if
	\begin{equation}\label{eqn: projection}
	y=\sum_{k=1}^{m}{y}_{k}\cdot{B}_{k},
	\end{equation}
then
	 $({y}_{1},{y}_{2},\dots,{y}_{m})$ is called a {\em mixed radix representation} of $y$ based on radixes $({b}_{1},{b}_{2},\dots,{b}_{m})$ and permutation $\pi$.
\end{definition}

%

The sequence $\mathcal{F}({y})$ of length $\prod_{k=1}^{m}{b}_{k}$ is called a {\em projection} from the array $\mathcal{F}({y}_{1},{y}_{2},\dots,{y}_{m})$ of size ${b}_{1}\times{b}_{2}\times\dots\times{b}_{m}$ if
\[
\mathcal{F}'({y})=\mathcal{F}({y}_{1},{y}_{2},\dots,{y}_{m}),
\]
where $({y}_{1},{y}_{2},\dots,{y}_{m})$ is a mixed radix representation of $y$ based on radixes $({b}_{1},{b}_{2},\dots,{b}_{m})$ and permutation $\pi$.
It is straightforward that their corresponding generating functions are connected by
\[
{F}'({z})=
{F}({z}^{B_1},{z}^{B_2},\dots,{z}^{B_m}),
\]
where  ${B_{k}}$ are bases determined by \eqref{eq: Bases_B}.

Similarly, the projection from arrays to that of lower dimensions can be easily expressed via the mixed-radix system as follows.

For $1\leq{v}\leq{n}$, define $\bm{y}_{v}=(y_{v,1},y_{v,2},\dots,y_{v,m_{v}})$.
The array $\mathcal{F}'({y}_{1},{y}_{2},\dots,{y}_{n})$ of size $(\prod_{k_1=1}^{m_1}{b}_{1,k_1})\times (\prod_{k_2=1}^{m_2}{b}_{2,k_2})\times\cdots\times(\prod_{k_n=1}^{m_n}{b}_{n,k_n})$ is called {\em projected} from the array 
$\mathcal{F}(\bm{y}_{1},\bm{y}_{2},\dots,\bm{y}_{n})$ 
of size ${b}_{1,1}\times{b}_{1,2}\times\dots\times{b}_{1,m_1}\times\dots\times{b}_{n,1}\times{b}_{n,2}\times\dots\times{b}_{n,m_n}$ 
if
\[
\mathcal{F}'({y}_{1},{y}_{2},\dots,{y}_{n})=\mathcal{F}(\bm{y}_{1},\bm{y}_{2},\dots,\bm{y}_{n}),
\]
where for all $1\leq{v}\leq{n}$, $({y}_{v,1},{y}_{v,2},\dots,{y}_{v,m_v})$ is a mixed radix representation of $y_{v}$ based on radixes $({b}_{v,1},{b}_{v,2},\dots,{b}_{v,m_v})$ and permutation $\pi_{v}$ of $\{1,2,\dots,{m}_{v}\}$.
It is straightforward that their corresponding generating functions are connected by
\[
{F}'({z}_{1},{z}_{2},\dots,{z}_{n})={F}({z}_{1}^{B_{1,1}},{z}_{1}^{B_{1,2}},\dots,{z}_{1}^{B_{1,m_1}},\dots, {z}_{n}^{B_{n,1}},{z}_{n}^{B_{n,2}},\dots,{z}_{n}^{B_{n,m_n}}),
\]
where $B_{v,k_{v}}$ ($1\leq{k}_{v}\leq{m}_{v}$) is determined by ${B}_{v,\pi_{v}(k_{v})}=\prod_{i=1}^{k_{v}-1}{b}_{v,\pi_{v}(i)}$.


If the projection is applied to Golay array pair and Golay array matrix, we have the following Property.
\begin{property}\label{pro: projection}
	\begin{enumerate}
		\item\cite{Array2,Array1}
		The array pair $\{\mathcal{F}({y}_{1},{y}_{2},\dots,{y}_{n}),\mathcal{G}({y}_{1},{y}_{2},\dots,{y}_{n})\}$  projected from the Golay array pair $\{\mathcal{F}(\bm{y}_{1},\bm{y}_{2},\dots,\bm{y}_{n}),\mathcal{G}(\bm{y}_{1},\bm{y}_{2},\dots,\bm{y}_{n})\}$ form a Golay array pair.
		\item
		\label{prop-2}
		The array matrix \[\bm{\mathcal{M}}({y}_{1},{y}_{2},\dots,{y}_{n})=\left[\begin{matrix}
			\mathcal{F}_{0,0}({y}_{1},{y}_{2},\dots,{y}_{n})&\mathcal{F}_{0,1}({y}_{1},{y}_{2},\dots,{y}_{n})\\
			\mathcal{F}_{1,0}({y}_{1},{y}_{2},\dots,{y}_{n})&\mathcal{F}_{1,1}({y}_{1},{y}_{2},\dots,{y}_{n})\\
		\end{matrix}\right]\]
	projected from the Golay array matrix \[\bm{\mathcal{M}}(\bm{y}_{1},\bm{y}_{2},\dots,\bm{y}_{n})=\left[\begin{matrix}
			\mathcal{F}_{0,0}(\bm{y}_{1},\bm{y}_{2},\dots,\bm{y}_{n})&\mathcal{F}_{0,1}(\bm{y}_{1},\bm{y}_{2},\dots,\bm{y}_{n})\\
			\mathcal{F}_{1,0}(\bm{y}_{1},\bm{y}_{2},\dots,\bm{y}_{n})&\mathcal{F}_{1,1}(\bm{y}_{1},\bm{y}_{2},\dots,\bm{y}_{n})\\
		\end{matrix}\right]\]
	form a Golay array matrix.
	\end{enumerate}
\end{property}
\begin{proof}
	We shall give the proof of the second part. The first proposition can be obtained similarly and is proved in \cite{Array2}.
	
	According to Definition \ref{def: CCA}, 
	the generating matrix of Golay array matrix $\bm{\mathcal{M}}(\bm{y}_{1},\bm{y}_{2},\dots,\bm{y}_{n})$ satisfy that $\bm{M}(\bm{z}_{1},\bm{z}_{2},\dots,\bm{z}_{n})\cdot\bm{M}^{\dagger}(\bm{z}_{1}^{-1},\bm{z}_{2}^{-1},\dots,\bm{z}_{n}^{-1})=c\cdot \bm{I}$.
	Since $\bm{M}({z}_{1},{z}_{2},\dots,{z}_{n})$, the generating matrix of Golay array matrix $\bm{\mathcal{M}}({y}_{1},{y}_{2},\dots,{y}_{n})$, can be derived from $\bm{M}(\bm{z}_{1},\bm{z}_{2},\dots,\bm{z}_{n})$.
	It is easy to known $\bm{M}({z}_{1},{z}_{2},\dots,{z}_{n})\cdot\bm{M}^{\dagger}({z}_{1}^{-1},{z}_{2}^{-1},\dots,{z}_{n}^{-1})=c\cdot \bm{I}$.
	This implies that $\bm{\mathcal{M}}({y}_{1},{y}_{2},\dots,{y}_{n})$ is also a Golay array matrix. 
\end{proof}
\begin{remark}
The second part is partly given in \cite[Prop 1]{CCA} for the projection from arrays to sequences.
\end{remark}

The relationships of multivariate (univariate) PU matrices,  Golay array (sequence) matrix and Golay array (sequence) pairs are displayed in Figure \ref{fig-1}. 

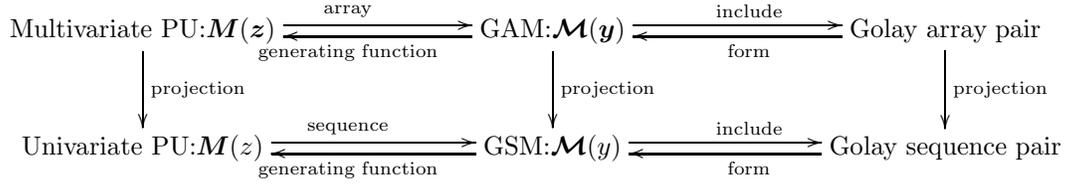
\begin{figure}
	\centering
	$ \begin{gathered}\xymatrix@C=2.5cm@R=1cm{
		\text{Multivariate PU:}{\bm{M}(\bm{z})}
		\ar@<.5ex>[r]^{\text{array}} 
		\ar[d]^{\text{projection}}
		&\text{GAM:}{\bm{\mathcal{M}}(\bm{y})}
		\ar@<.5ex>[l]^{\text{generating function}}
		\ar@<.5ex>[r]^{\text{include}} 
		\ar[d]^{\text{projection}}
		&\ar@<.5ex>[l]^{\text{form}}{\text{Golay array pair}}\ar[d]^{\text{projection}}\\
		\text{Univariate PU:}{\bm{M}({z})}\ar@<.5ex>[r]^{\text{sequence}}  &\text{GSM:}{\bm{\mathcal{M}}({y})}
		\ar@<.5ex>[l]^{\text{generating function}} \ar@<.5ex>[r]^{\text{include}}& \ar@<.5ex>[l]^{\text{form}}{\text{Golay sequence pair}} \\
	}
	\end{gathered} $
	\caption{The relationship of the main concepts}\label{fig-1}
\end{figure}

\subsection{Function Matrices and Affine Functions}

Usually the in-range entries of $ \mathcal{F}(\bm{y}) $ are constrained to lie in a small finite set ${S}$ called the array alphabet. 

Let $\zeta$ denote the ${H}$-th primitive unit root exp$(2\pi\sqrt{-1}/{H})$ for certain even integer ${H}$. If the entries lie in ${S}_{{H}\text{-PSK}} =\{1,\zeta,\zeta^{2},\dots,\zeta^{{H}-1}\}$, $\mathcal{F}(\bm{y})$ is called an ${H}$-PSK array.
In particular, QPSK constellation is denoted by ${S}_{QPSK}={S}_{4\text{-PSK}}$.
The function $f(\bm{y})=f({y}_1,{y}_2,\cdots, {y}_{n}):\mathbf{Z}_{{b}_1}\times \mathbf{Z}_{{b}_2}\cdots\times \mathbf{Z}_{{b}_n}\rightarrow \Z_{H}$ (where $\Z_{H}$ means the ring of integers modulo ${H}$) is called {\em function} of array $\mathcal{F}(\bm{y})$ if
\begin{equation}
	\mathcal{F}({y}_1,{y}_2,\cdots,{y}_{n})= {\zeta}^{f({y}_1,{y}_2,\cdots, {y}_{n})}.
\end{equation}

In particular, if the ${H}$-PSK array is of size ${b}_1\times {b}_2\cdots\times {b}_n=2\times2\times\cdots\times2$ (or abbreviated to size $\bm{2}^{(n)}$), it's function
$f({y}_1,{y}_2,\cdots, {y}_{n})$ can be expressed by the {\em Generalized Boolean function (GBF)} 
$f(x_1, x_2, \cdots, x_{n}):\mathbf{Z}_{2}^{n}\rightarrow\Z_{H}$. The algebraic normal form of GBFs can be uniquely expressed as a linear combination over $\mathbb{Z}_{H}$ of the monomials \[1, x_1, x_2, \cdots, x_{m}, x_1x_2, x_1x_3, \cdots, x_{m-1}x_{m}, \cdots, x_1x_2x_3\cdots x_{m}.\]

Denote the addition of sets by operator $\oplus$, i.e., $a\cdot{S}_{1}\oplus b\cdot{S}_{2}=\{a\alpha+b \beta|\alpha\in{S}_{1},\beta\in{S}_{2}\}$.
The $4^q$-QAM alphabet can be viewed as the weighted sums of $q$  QPSK constellation, with weights in the ratio of $2^{q-1}:2^{q-2}:\dots:1$ \cite{Li2010A}, i.e., ${S}_{4^{q}\text{-QAM}} =\left\{{\bigoplus}_{p=0}^{q-1}2^{p}\cdot{S}_{\text{QPSK}}\right\}$.
Conversely,  the QPSK constellation can be seen as a special case of $4^q$-QAM constellation when $q=1$.	
Let $\xi$ denote the fourth primitive unit root throughout the paper.  
Thus a $4^{q}$-QAM array $F(\bm{y})$ can be viewed as the weighted sums of arrays $\{F^{(p)}(\bm{y})|0\leq p<q\}$ over QPSK and can be uniquely described by multi-output function 
$\vec{f}(\bm{y})
=\vec{f}({y}_1,{y}_2,\cdots,{y}_{n})=\{f^{(p)}(\bm{y})|0\le p<q \}:\mathbf{Z}_{{b}_1}\times \mathbf{Z}_{{b}_2}\cdots\times \mathbf{Z}_{{b}_n}\rightarrow \Z_{4}^{q}$ as follows.
\begin{equation}
	\mathcal{F}(\bm{y})=\sum_{p=0}^{q-1}2^{q-1-p}\cdot \mathcal{F}^{(p)}(\bm{y})=\sum_{p=0}^{q-1}2^{q-1-p}\cdot {\xi}^{f^{(p)}(\bm{y})}.
\end{equation}

\begin{example}
	For example, suppose an $2\times2\times3$ array $\mathcal{F}(y_{1},y_{2},y_{3})$ over $16$-QAM is given by
\begin{equation}
\begin{array}{lll}
	\mathcal{F}(0,0,0)=2i+1, &
\mathcal{F}(0,0,1)=-1,&
\mathcal{F}(0,0,2)=-2i+1,\\
\mathcal{F}(0,1,0)=2i+1,&
\mathcal{F}(0,1,1)=-2+i,&
\mathcal{F}(0,1,2)=-2i-1,\\
\mathcal{F}(1,0,0)=i,&
\mathcal{F}(1,0,1)=-2-i,&
\mathcal{F}(1,0,2)=-3i,\\
\mathcal{F}(1,1,0)=-2-i,&
\mathcal{F}(1,1,1)=-2i+1,&
\mathcal{F}(1,1,2)=3.
\end{array}
\end{equation}
It can be expressed by weighted sum of arrays over QPSK, i.e, $\mathcal{F}(y_{1},y_{2},y_{3})=2\mathcal{F}^{(0)}(y_{1},y_{2},y_{3})+\mathcal{F}^{(1)}(y_{1},y_{2},y_{3})$, where 
\[\left(\mathcal{F}^{(0)}(y_{1},y_{2},y_{3})\middle|_{y_{1},y_{2},y_{3}=(0,0,0)}^{(1,1,2)}\right)=(i,-1,-i,i,-1,-i,i,-1,-i,-1,-i,1),\]
\[\left(\mathcal{F}^{(1)}(y_{1},y_{2},y_{3})\middle|_{y_{1},y_{2},y_{3}=(0,0,0)}^{(1,1,2)}\right)=(1,1,1,1,i,-1,-i,-i,-i,-i,1,i).\]
And the corresponding functions are ${f}^{(0)}(y_{1},y_{2},y_{3})=y_{1}y_{2}+y_{3}+1$, and ${f}^{(1)}(y_{1},y_{2},y_{3})=3y_{1}+y_{2}y_{3}$.
\end{example}

In particular, if the array size is ${b}_1\times {b}_2\cdots\times {b}_n=\bm{2}^{(n)}$,
$\vec{f}(\bm{y})$ can be expressed by multi-output GBF  $\vec{f}(\bm{x})
=\{f^{(p)}(\bm{x})|0\le p<q \}: \mathbf{Z}_{2}^{n}\rightarrow \Z_{4}^{q}$.

In Subsection \ref{Subec: Golay array matrix and PU matrix}, we present the array matrix $\bm{\mathcal{M}}(\bm{y})$ in \eqref{eq: array-matrix} and corresponding generating matrix $\bm{M}(\bm{z})$ in \eqref{eqn: gene-matrix}.
Here we will introduce their corresponding function matrix by
\begin{description}
\item [${H}$-PSK Case :]
\begin{equation}\label{eq: function-matrix}
	\widetilde{{\bm{M}}}({\bm{y}})=
	\begin{bmatrix}
		{f}_{0,0}({\bm{y}})&{f}_{0,1}({\bm{y}})\\
		{f}_{1,0}({\bm{y}})&{f}_{1,1}({\bm{y}})\\
	\end{bmatrix}
	=\left\{{f}_{i,j}(\bm{x},\bm{y})\middle|0\leq{i,j}\leq1\right\}
\end{equation}
where $f_{i,j}(\bm{y})$ over $\Z_{H}$ satisfy $F_{i,j}(\bm{y})={\zeta}^{f_{i,j}(\bm{y})}$ ($0\leq{i},{j}\leq1$), or
\item [$4^{q}$-QAM Case :]
\begin{equation}\label{eqn: V-F-matrix}
	\widetilde{\mathbb{M}}({\bm{y}})=	\begin{bmatrix}
		\vec{f}_{0,0}({\bm{y}})&\vec{f}_{0,1}({\bm{y}})\\
		\vec{f}_{1,0}({\bm{y}})&\vec{f}_{1,1}({\bm{y}})\\
	\end{bmatrix}
	=\left\{{f}_{i,j}^{(p)}(\bm{x},\bm{y})\middle|0\leq{i,j}\leq1;0\leq{p}<q\right\}
\end{equation}
where $\vec{f}_{i,j}(\bm{y})
=\{f_{i,j}^{(p)}(\bm{y})|0\le p<q \}$, 
$f_{i,j}^{(p)}(\bm{y})$ over $\Z_{4}$ satisfy $\mathcal{F}_{i,j}(\bm{y})=\sum_{p=0}^{q-1}2^{q-1-p}\cdot {\xi}^{f_{i,j}^{(p)}(\bm{y})}$ ($0\leq{i},{j}\leq1$). 
\end{description}

An affine function ${l}(\bm{y})$ over $\Z_{H}$  
can be expressed by $\sum_{k=1}^{n}c_k{y}_k+c_{0}$ ($c_k\in \Z_{H},0\le k\le n$), where the additions and multiplications is computed modulo ${H}$.
\begin{property}[Affine function]\label{prop: QAM_linear}
Suppose ${l}(\bm{y})$ is an affine function over $\Z_{H}$.
If $\bm{\mathcal{M}}({\bm{y}})$ in equation (\ref{eq: array-matrix})  is a {Golay array matrix}, then the array matrix $\bm{\mathcal{M}}'({\bm{y}})$ whose  function matrix given by 
		\begin{equation}\label{eq: PSK_linear}
			\widetilde{\bm{M}}'({\bm{y}})=\widetilde{\bm{M}}({\bm{y}})+{l}(\bm{y})\cdot\bm{J}
		\end{equation}
	for ${H}$-PSK case or
		\begin{equation}
			\widetilde{\mathbb{M}}'({\bm{y}})=\widetilde{\mathbb{M}}({\bm{y}})+{l}(\bm{y})\cdot\vec{1}_{q}\cdot\bm{J}
		\end{equation}
	(where ${H}=4$) for $4^{q}$-QAM case, where $\bm{J}$ is the all ``1" matrix of order $2$, is also a Golay array matrix.
	\end{property}
	\begin{proof}
	We shall present the proof for the ${H}$-PSK Case. The proof of $4^{q}$-QAM case can be obtained similarly.
	
	Suppose $F({z}_{1},{z}_{2},\dots,{z}_{n})$ is the generating matrix of array $f({y}_{1},{y}_{2},\dots,{y}_{n})$, i.e,
	\[
	F({z}_{1},{z}_{1},\dots,{z}_{n})=\zeta^{f({y}_{1},{y}_{2},\dots,{y}_{n})}\cdot{z}_1^{{y}_1}{z}_2^{{y}_2}\cdots {z}_{n}^{{y}_{n}}.
	\]
	Then the generating function of $f'({y}_{1},{y}_{2},\dots,{y}_{n})=f({y}_{1},{y}_{2},\dots,{y}_{n})+l({y}_{1},{y}_{2},\dots,{y}_{n})$ is given by
	\[
\begin{split}
	F'({z}_{1},{z}_{1},\dots,{z}_{n})&=\zeta^{f({y}_{1},{y}_{2},\dots,{y}_{n})+\sum_{k=1}^{n}c_k{y}_k+c_{0}}\cdot{z}_1^{{y}_1}{z}_2^{{y}_2}\cdots {z}_{n}^{{y}_{n}}\\
	&=\zeta^{c_{0}}\cdot\zeta^{f({y}_{1},{y}_{2},\dots,{y}_{n})}\cdot(\zeta^{c_{1}}{z}_1)^{{y}_1}(\zeta^{c_{2}}{z}_2)^{{y}_2}\cdots (\zeta^{c_{n}}{z}_n)^{{y}_{n}}\\
	&={\xi}^{c_0}\cdot{F}({\xi}^{c_1}{z}_1, {\xi}^{c_2}{z}_2, \cdots, {\xi}^{c_n}{z}_{n}),
\end{split}
\]		
Suppose $\bm{M}'(\bm{z})$ is the generating matrix of ${\bm{M}}'({\bm{y}})$, 
it is straightforward that
    \begin{equation}
    \bm{M}'(\bm{z})={\xi}^{c_0}\cdot\bm{M}({\xi}^{c_1}{z}_1, {\xi}^{c_2}{z}_2, \cdots, {\xi}^{c_n}{z}_{n}),
    \end{equation}	
    which is a PU matrix since $\bm{M}(\bm{z})$ is a PU matrix.
	This complete the proof.
	\end{proof}

\begin{remark}
	The affine offsets property for PSK case of size ${p}\times{p}\times\dots\times{p}$ (where $p$ is an integer) is introduced in \cite[Property 2, Theorem 1]{CCA}.
We generalize the property to the generalized sizes for PSK case and $4^{q}$-QAM case here.
\end{remark}

\section{Constructions}\label{Sec: PSK GCP Based on PU}

As discussed in Section \ref{Sec: Definition and Notations}, 
the construction of Golay sequence pair can be derived from Golay array pair by projection based on Properties \ref{pro: projection} and \ref{prop: QAM_linear}. On the other hand, the construction of Golay array pair is equivalent to construction of Golay array matrix based on multivariate PU matrix.
Our construction is inspired by the relationship diagram from PU matrix to Golay sequence pair in Figure \ref{fig-1}.

Before the construction, we first introduce the following notations.

For $0\leq k\leq n$, $1\leq i\leq r_{k}$, define variables  ${y}_{k,i}\in\mathbf{Z}_{{b}_{k,i}}$, which measure the coordinates of arrays in the dimensions of size ${b}_{k,i}$. Define ${z}_{k,i}$ as the corresponding indeterminates in the generating functions.
For $1\leq k\leq n$, define variables  ${x}_{k}\in\mathbf{Z}_{2}$, which measure the the coordinates of arrays in the dimensions of size $2$. Define ${z}_{k}$ as the corresponding indeterminates in the generating functions.

Abbreviate variable vector $({y}_{k,1},{y}_{k,1},\dots,{y}_{k,r_k})$ to $\bm{y}_{k}$, size ${b}_{k,1}\times{b}_{k,1}\times\dots\times{b}_{k,r_k}$ to $\bm{b}_{k}$
and indeterminate vector $({z}_{k,1},{z}_{k,1},\dots,{z}_{k,r_k})$ to $\bm{z}_{k}$.
Thus any array $\mathcal{F}(\bm{y}_{k})$ 
in this section are of size 
$\bm{b}_k$,
any function ${f}(\bm{y}_{k})$ are from $\mathbf{Z}_{\bm{b}_{k}}:=\mathbf{Z}_{{b}_{k,1}}\times \mathbf{Z}_{{b}_{k,2}}\cdots\times \mathbf{Z}_{{b}_{k,r_k}}$  to $\Z_{H}$. Polynomial $F(\bm{z}_{k})$ are sum of the multivariate monomials $\bm{z}_{k}^{\bm{y}_{k}}$ ($\bm{y}_{k}\in\mathbf{Z}_{\bm{b}_{k}}$) with weights $\mathcal{F}(\bm{x}_{k})$.
%

Abbreviate 
$({x}_{1},{x}_{2},\dots,{x}_{n},\bm{y}_{0},\bm{y}_{1},\bm{y}_{2},\dots,\bm{y}_{n})$ to $(\bm{x},\bm{y})$ and $({z}_{1},{z}_{2},\dots,{z}_{n},\bm{z}_{0},\bm{z}_{1},\bm{z}_{2},\dots,\bm{z}_{n})$ to $(\bm{z},\bm{z}')$.
Thus any constructed array $\mathcal{F}(\bm{x},\bm{y})$ followed by variable $(\bm{x},\bm{y})$ in this section are of size 
$\bm{2}^{(n)}\times \bm{b}_0\times \bm{b}_1\cdots\times \bm{b}_n$,
any function ${f}(\bm{x},\bm{y})$ are from $(\mathbf{Z}_{2})^{n}\times\mathbf{Z}_{\bm{b}_{0}}\times \mathbf{Z}_{\bm{b}_{1}}\cdots\times \mathbf{Z}_{\bm{b}_{n}}$ to $\Z_{H}$.
Polynomial $F(\bm{z},\bm{z}')$ are sum of the multivariate monomials $\bm{z}^{\bm{x}}\cdot{\bm{z}'}^{\bm{y}}$ 
with weights $\mathcal{F}(\bm{z},\bm{z}')$.
 
Denote {\em delay matrix} $\bm{D}(z)$ by
\begin{equation}
	\bm{D}(z)=\begin{bmatrix}
		1&0\\
		0&z
	\end{bmatrix}.
\end{equation}
For $0\leq{k}\leq{n}$, suppose that
${\bm{U}}^{\{k\}}(\bm{z}_{k})$ are 
generating matrices of array matrices $\bm{\mathcal{U}}^{\{k\}}(\bm{y}_{k})$ of size $\bm{b}_{k}$. Define 
\begin{equation}\label{eq: M(z)}
	\begin{aligned}
		\bm{M}(\bm{z},\bm{z}')
		&=\bm{U}^{\{0\}}(\bm{z}_{0})\cdot\bm{D}({z}_{1})\cdot\bm{U}^{\{1\}}(\bm{z}_{1})\cdot\bm{D}({z}_{2})\cdots\bm{U}^{\{n-1\}}(\bm{z}_{n-1})\cdot\bm{D}({z}_{n})\cdot\bm{U}^{\{n\}}(\bm{z}_{n})
		\\&
		=\bm{U}^{\{0\}}(\bm{z}_{0})\cdot\prod_{k=1}^{n}\left(\bm{D}({z}_{k})\cdot\bm{U}^{\{k\}}(\bm{z}_{k})\right).
	\end{aligned}
\end{equation}	
If ${\bm{U}}^{\{k\}}(\bm{z}_{k})$ are PU matrices, then $ \bm{M}(\bm{z},\bm{z}') $ given in \eqref{eq: M(z)}
is also a PU matrix. 
Thus \eqref{eq: M(z)} generate Golay array matrices from Golay array matrices of lower dimensions.
Denote the corresponding Golay array matrix of $\bm{M}(\bm{z},\bm{z}')$ by $\bm{\mathcal{M}}(\bm{x},\bm{y})$.
If ${\bm{U}}^{\{k\}}(\bm{z}_{k})$ ($0\leq{k}\leq{n}$) are over ${H}$-PSK, then $\bm{M}(\bm{z},\bm{z}')$ is also over ${H}$-PSK.
%
For an aimed constellation $4^{q}$-QAM, we propose a method of choosing proper ${\bm{U}}^{\{k\}}(\bm{z}_{k})$. 
The details are given in the following subsections respectively.

\subsection{Golay Array Matrix over ${H}$-PSK}

\begin{lemma}\cite[Theorem 7]{fullpaper} \label{lem: fun_UDU}
Suppose that ${\bm{U}}^{\{k\}}(\bm{z}_{k})$ in \eqref{eq: M(z)} are generating matrices over ${H}$-PSK constellation, and $\widetilde{\bm{U}}^{\{k\}}(\bm{x}_{k})$ are corresponding function matrices.
The corresponding  function matrix of $ \bm{M}(\bm{z},\bm{z}') $ is given by
\begin{equation}\label{QPSK-GBF-matrix}
	\bm{\widetilde{M}}(\bm{x},\bm{y})
	=\bm{\widetilde{U}}^{\{0\}}(\bm{y}_{0})\cdot\bm{\delta}({x}_{1})\cdot\bm{J}+\sum_{k=1}^{n-1}\bm{J}\cdot
	\bm{\delta}({x}_{k})\cdot\bm{\widetilde{U}}^{\{k\}}(\bm{y}_{k})\cdot\bm{\delta}({x}_{k+1})\cdot\bm{J}+\bm{J}\cdot
	\bm{\delta}({x}_{n})\cdot\bm{\widetilde{U}}^{\{n\}}(\bm{y}_{n}).
\end{equation}
where 
\begin{equation}
	\bm{\delta}(x)=\begin{bmatrix}1{-}x&0\\
		0& x \end{bmatrix}.
\end{equation}
\end{lemma}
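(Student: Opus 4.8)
The plan is to establish the formula by translating the matrix product \eqref{eq: M(z)} for $\bm{M}(\bm{z},\bm{z}')$ into the language of function matrices, using the known dictionary between a generating matrix over $H$-PSK and its function matrix. The key observation is that the delay matrix $\bm{D}(z_k)$ acts on an $H$-PSK function matrix by a position-dependent monomial multiplication: multiplying the second row of a generating matrix by $z_k$ corresponds, at the array level, to shifting the coordinate $x_k$ of that row, which in the function-matrix picture is captured by the matrix $\bm{\delta}(x_k)=\mathrm{diag}(1-x_k,\,x_k)$ together with the all-ones matrix $\bm{J}$ that records how the row index interacts with subsequent factors. So first I would record precisely how each of the three types of factor — $\bm{U}^{\{0\}}(\bm{z}_0)$, the pairs $\bm{D}(z_k)\cdot\bm{U}^{\{k\}}(\bm{z}_k)$ for $1\le k\le n-1$, and the final $\bm{D}(z_n)\cdot\bm{U}^{\{n\}}(\bm{z}_n)$ — contributes to the exponent of $\zeta$ when one extracts the coefficient of a fixed monomial $\bm{z}^{\bm{x}}\cdot{\bm{z}'}^{\bm{y}}$ from the full product.

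The main technical step is a careful bookkeeping of the matrix product when entries are generating functions. When one multiplies two $2\times 2$ generating matrices, the $(i,j)$ entry of the product is $\sum_{\ell} A_{i,\ell}(\bm{z})B_{\ell,j}(\bm{z})$; extracting the coefficient of a fixed power of the ``$\bm{z}$'' indeterminates forces a unique choice of the internal index $\ell$ once the Boolean variables $x_k$ are fixed — this is exactly the role of $\bm{\delta}(x_k)$, which is the ``indicator'' of whether $x_k=0$ (selecting the first row/column) or $x_k=1$ (selecting the second). Because the exponents add but the $\zeta$-exponents (i.e.\ the $H$-PSK \emph{functions}) also add, the product of generating matrices becomes a \emph{sum} of function matrices, with the $\bm{J}$ factors appearing because the contribution of $\widetilde{\bm{U}}^{\{k\}}(\bm{y}_k)$ to the final $(i,j)$ entry is independent of $i$ and $j$ once the chain of internal indices is pinned down by $x_1,\dots,x_{n+1}$ (with $x_0$ and $x_{n+1}$ effectively absent, which is why the first and last terms in \eqref{QPSK-GBF-matrix} are missing one of the $\bm{J}\cdot\bm{\delta}$ or $\bm{\delta}\cdot\bm{J}$ wrappers). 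I would carry this out by induction on $n$: the base case $n=0$ is just $\bm{\widetilde M}=\bm{\widetilde U}^{\{0\}}$, and the inductive step appends one factor $\bm{D}(z_{n})\cdot\bm{U}^{\{n\}}(\bm{z}_{n})$ and checks that the new term $\bm{J}\cdot\bm{\delta}(x_n)\cdot\bm{\widetilde U}^{\{n\}}(\bm{y}_n)$ (and the conversion of the previously-final term into a $\bm{J}$-wrapped one) is produced correctly.

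The step I expect to be the genuine obstacle is verifying that the delay matrix $\bm{D}(z_k)$ is faithfully represented by insertion of $\bm{\delta}(x_k)$ between $\bm{J}$-blocks, rather than by some other combinatorial gadget. Concretely, one must check that $\bm{D}(z_k)$ sitting between $\bm{U}^{\{k-1\}}$ and $\bm{U}^{\{k\}}$ has the effect of: (i) introducing the new Boolean coordinate $x_k$; (ii) attaching the monomial $z_k^{x_k}$; and (iii) correctly merging the function-matrix data on its left and right so that the resulting expression has the stated block structure with the all-ones matrices in the indicated places. This is essentially Lemma~\ref{lem: fun_UDU}'s content, and since it is quoted from \cite[Theorem 7]{fullpaper} I would either cite that computation directly or, for completeness, reproduce the one-factor verification and then let the induction do the rest; everything beyond that point is routine matrix algebra over $\Z_H$.
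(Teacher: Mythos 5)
Your proposal is sound, but note that this paper contains no proof of Lemma \ref{lem: fun_UDU} to compare against: the lemma is imported from \cite[Theorem 7]{fullpaper}, and the $\bm{\delta}$/$\bm{J}$ bookkeeping only resurfaces in the proof of Theorem \ref{thm: PSK_CCA}, where \eqref{QPSK-GBF-matrix} is taken as given. Your reconstruction is the right underlying computation, and it can be made even more direct than your induction: expanding \eqref{eq: M(z)} entrywise gives
\[
M_{i,j}(\bm{z},\bm{z}')=\sum_{\ell_1,\dots,\ell_n\in\{0,1\}} z_1^{\ell_1}\cdots z_n^{\ell_n}\,F^{\{0\}}_{i,\ell_1}(\bm{z}_0)\,F^{\{1\}}_{\ell_1,\ell_2}(\bm{z}_1)\cdots F^{\{n\}}_{\ell_n,j}(\bm{z}_n),
\]
and since the indeterminates $z_1,\dots,z_n,\bm{z}_0,\dots,\bm{z}_n$ are all distinct, the coefficient of $z_1^{x_1}\cdots z_n^{x_n}\bm{z}_0^{\bm{y}_0}\cdots\bm{z}_n^{\bm{y}_n}$ comes from the single choice $\ell_k=x_k$, namely $\zeta$ raised to $f^{\{0\}}_{i,x_1}(\bm{y}_0)+\sum_{k=1}^{n-1}f^{\{k\}}_{x_k,x_{k+1}}(\bm{y}_k)+f^{\{n\}}_{x_n,j}(\bm{y}_n)$; a one-line entrywise check (e.g.\ $[\bm{J}\,\bm{\delta}(x_k)\,\widetilde{\bm{U}}^{\{k\}}(\bm{y}_k)\,\bm{\delta}(x_{k+1})\,\bm{J}]_{i,j}=f^{\{k\}}_{x_k,x_{k+1}}(\bm{y}_k)$ for Boolean $x_k,x_{k+1}$) shows that \eqref{QPSK-GBF-matrix} encodes exactly this exponent. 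That uniqueness of the monomial — your ``unique choice of the internal index'' — is the only point of substance, and you identify it correctly; the induction on $n$ is valid but adds nothing beyond this one-shot coefficient extraction. One small slip in wording: the first and last terms of \eqref{QPSK-GBF-matrix} are not missing wrappers because ``$x_0$ and $x_{n+1}$ are effectively absent,'' but because those terms retain genuine dependence on the outer row and column indices $i$ and $j$ (which Theorem \ref{thm: PSK_CCA} subsequently renames $x_0$ and $x_{n+1}$); that is precisely why the left (resp.\ right) factor $\bm{J}$ is omitted there.
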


Based on Lemma \ref{lem: fun_UDU}, we can abstract the function expression of a Golay array matrix over ${H}$-PSK easily.
\begin{theorem}[PSK Golay array matrix]\label{thm: PSK_CCA}
Suppose that $\bm{\mathcal{U}}^{\{k\}}(\bm{y}_{k})$ are over ${H}$-PSK. The $(i,j)$-th entries in the corresponding function matrix $\widetilde{\bm{U}}^{\{k\}}(\bm{y}_{k})$ is ${f}_{i,j}^{\{k\}}(\bm{y}_{k})$.
$\bm{M}(\bm{z},\bm{z}')$ are over $4^{q}$-QAM,
whose function matrix $\widetilde{{\bm{M}}}(\bm{x},\bm{y})=\left\{{f}_{i,j}(\bm{x},\bm{y})\middle|0\leq{i,j}\leq1\right\}$ is  given by
\begin{equation}\label{eq: non-standard CCA}
{f}_{i,j}(\bm{x},\bm{y})=\sum_{k=0}^{n}
\left({f}_{0,0}^{\{k\}}(\bm{y}_{k})\cdot\bar{x}_{k}\bar{x}_{k+1}
+{f}_{1,0}^{\{k\}}(\bm{y}_{k})\cdot{x}_{k}\bar{x}_{k+1}
+{f}_{0,1}^{\{k\}}(\bm{y}_{k})\cdot\bar{x}_{k}{x}_{k+1}
+{f}_{1,1}^{\{k\}}(\bm{y}_{k})\cdot{x}_{k}{x}_{k+1}\right)
\end{equation}
where $x_{0}=i$ and $x_{n+1}=j$ for simplicity, and $\bar{x}:=1-x$
\end{theorem}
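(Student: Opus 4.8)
The plan is to derive the stated formula \eqref{eq: non-standard CCA} directly from Lemma~\ref{lem: fun_UDU}, which already gives the function matrix $\bm{\widetilde{M}}(\bm{x},\bm{y})$ of $\bm{M}(\bm{z},\bm{z}')$ as a sum of matrix products involving $\bm{J}$, the diagonal matrices $\bm{\delta}(x_k)$, and the ``seed'' function matrices $\bm{\widetilde{U}}^{\{k\}}(\bm{y}_k)$. The core observation is that for any $2\times2$ matrix $\bm{A}=\left[\begin{smallmatrix}a_{0,0}&a_{0,1}\\a_{1,0}&a_{1,1}\end{smallmatrix}\right]$, the sandwiched product $\bm{\delta}(x)\cdot\bm{A}\cdot\bm{\delta}(x')$ has $(i,j)$-entry equal to $\delta(x)_{i,i}\,a_{i,j}\,\delta(x')_{j,j}$, where $\delta(x)_{0,0}=1-x=\bar x$ and $\delta(x)_{1,1}=x$; flanking this by $\bm{J}$ on each side then sums all four of these weighted entries, and the $(i,j)$-entry of the $\bm{J}$-flanked product becomes independent of $i,j$ and equals $\sum_{a,b\in\{0,1\}} (\bar x)^{1-a} x^{a}\, a_{a,b}\, (\bar{x'})^{1-b} (x')^{b}$. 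Writing $\bar x_k = 1-x_k$, this is exactly the bracketed expression in \eqref{eq: non-standard CCA} with $\bm{A}=\bm{\widetilde{U}}^{\{k\}}(\bm{y}_k)$ and $(x,x')=(x_k,x_{k+1})$.

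First I would introduce the convention $x_0=i$ and $x_{n+1}=j$ and the notation $\bar x=1-x$, and record the elementary entrywise identities above: (i) $\big(\bm{\delta}(x)\cdot\bm{A}\cdot\bm{\delta}(x')\big)_{i,j} = \bar x^{\,1-i} x^{\,i}\, a_{i,j}\, \bar{x'}^{\,1-j} (x')^{j}$, and (ii) $\big(\bm{J}\cdot\bm{B}\cdot\bm{J}\big)_{i,j} = \sum_{a,b} b_{a,b}$ for any $\bm{B}$, so the result is a scalar times $\bm{J}$. Then I would handle the three types of terms in \eqref{QPSK-GBF-matrix} separately: the first term $\bm{\widetilde U}^{\{0\}}(\bm y_0)\cdot\bm\delta(x_1)\cdot\bm J$ has no left $\bm\delta$, so after checking its $(i,j)$-entry one sees it matches the $k=0$ summand with the convention $x_0=i$ (the ``missing'' left $\bm\delta(x_0)$ is replaced by selecting row $i$, which is precisely what $x_0=i$ encodes); symmetrically the last term $\bm J\cdot\bm\delta(x_n)\cdot\bm{\widetilde U}^{\{n\}}(\bm y_n)$ matches the $k=n$ summand via $x_{n+1}=j$; and each middle term $\bm J\cdot\bm\delta(x_k)\cdot\bm{\widetilde U}^{\{k\}}(\bm y_k)\cdot\bm\delta(x_{k+1})\cdot\bm J$ matches the $k$-th summand for $1\le k\le n-1$ by identities (i) and (ii).

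Summing over $k$ from $0$ to $n$ then yields \eqref{eq: non-standard CCA}, since $\bm{\widetilde{M}}(\bm x,\bm y)$ is the sum of these pieces and the $(i,j)$-entry of each piece is the claimed bracket. Finally I would note that $\bm{M}(\bm z,\bm z')$ is a PU matrix whenever the $\bm{U}^{\{k\}}(\bm z_k)$ are (as already observed after \eqref{eq: M(z)}), hence $\bm{\mathcal M}(\bm x,\bm y)$ is a Golay array matrix over the same alphabet; when the seeds are over $H$-PSK so is $\bm{M}(\bm z,\bm z')$, which justifies the function-matrix (rather than multi-output) form used in the statement.

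I expect the only real subtlety — not an obstacle so much as a bookkeeping point that must be made carefully — to be the boundary terms $k=0$ and $k=n$: one must verify that introducing the formal symbols $x_0=i$, $x_{n+1}=j$ correctly reproduces the effect of the absent flanking $\bm\delta$ and $\bm J$ factors in the first and last terms of \eqref{QPSK-GBF-matrix}. Everything else is the routine entrywise expansion of products of $2\times2$ matrices, so I would state identities (i)–(ii) once and then apply them uniformly, keeping the writeup short.
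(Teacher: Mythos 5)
Your proposal is correct and follows essentially the same route as the paper: both start from Lemma~\ref{lem: fun_UDU} and expand the three types of terms (the two boundary terms and the $\bm{J}$-flanked middle terms) entrywise, checking that the conventions $x_0=i$, $x_{n+1}=j$ reproduce the missing flanking factors, exactly as in the paper's display \eqref{eq: JDUDJ=?}. Your identities (i)--(ii) are just a slightly more systematic packaging of the same computation.
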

	
\begin{proof}
It is easy to verify that
\begin{subequations}\label{eq: JDUDJ=?}
	\begin{alignat}{1}
		\bm{\widetilde{U}}^{\{0\}}(\bm{y}_{0})\cdot\bm{\delta}({x}_{1})\cdot\bm{J}
		&=\begin{bmatrix}
			{f}_{0,0}^{\{0\}}(\bm{y}_{0})\cdot\bar{x}_{1}+{f}_{0,1}^{\{0\}}(\bm{y}_{0})\cdot{x}_{1}&{f}_{0,0}^{\{0\}}(\bm{y}_{0})\cdot\bar{x}_{1}+{f}_{0,1}^{\{0\}}(\bm{y}_{0})\cdot{x}_{1}\\
			{f}_{1,0}^{\{0\}}(\bm{y}_{0})\cdot\bar{x}_{1}+{f}_{1,1}^{\{0\}}(\bm{y}_{0})\cdot{x}_{1}&{f}_{1,0}^{\{0\}}(\bm{y}_{0})\cdot\bar{x}_{1}+{f}_{1,1}^{\{0\}}(\bm{y}_{0})\cdot{x}_{1}
		\end{bmatrix}\\
		\bm{J}\cdot
		\bm{\delta}({x}_{n})\cdot\bm{\widetilde{U}}^{\{n\}}(\bm{y}_{n})
		&=\begin{bmatrix}
			{f}_{0,0}^{\{n\}}(\bm{y}_{0})\cdot\bar{x}_{n}{+}{f}_{1,0}^{\{n\}}(\bm{y}_{0})\cdot{x}_{n}&{f}_{0,1}^{\{n\}}(\bm{y}_{0})\cdot\bar{x}_{n}{+}{f}_{1,1}^{\{n\}}(\bm{y}_{0})\cdot{x}_{n}\\
			{f}_{0,0}^{\{n\}}(\bm{y}_{0})\cdot\bar{x}_{n}{+}{f}_{1,0}^{\{n\}}(\bm{y}_{0})\cdot{x}_{n}&{f}_{0,1}^{\{n\}}(\bm{y}_{0})\cdot\bar{x}_{n}{+}{f}_{1,1}^{\{n\}}(\bm{y}_{0})\cdot{x}_{n}
		\end{bmatrix}\\
		\bm{J}\cdot
		\bm{\delta}({x}_{1})\cdot\bm{\widetilde{U}}^{\{k\}}(\bm{y}_{k})\cdot\bm{\delta}({x}_{2})\cdot\bm{J}&=\left(
		{f}_{0,0}^{\{k\}}(\bm{y}_{k})\cdot\bar{x}_{k}\bar{x}_{k+1}
		+{f}_{1,0}^{\{k\}}(\bm{y}_{k})\cdot{x}_{k}\bar{x}_{k+1}\right.\notag\\
		&+\left.{f}_{0,1}^{\{k\}}(\bm{y}_{k})\cdot\bar{x}_{k}{x}_{k+1}
		+{f}_{1,1}^{\{k\}}(\bm{y}_{k})\cdot{x}_{k}{x}_{k+1}\right)\cdot\bm{J}.
	\end{alignat}
\end{subequations}	
The final results in \eqref{eq: non-standard CCA} can be obtained by substituting the items in \eqref{QPSK-GBF-matrix} with \eqref{eq: JDUDJ=?}. 
\end{proof}

\begin{remark}
As we discussed in Subsection \ref{Subec: Golay array matrix and PU matrix}, the column (or row) in the Golay array (sequence) matrix form a Golay array (sequence) pair. The result is equivalent to \cite[Theorem 7]{Array2}.
\end{remark}

\subsection{Golay Array Matrix over $4^{q}$-QAM}\label{Sec: GAM-QAM}

We shall give a way of choosing the component PU matrices $\bm{U}^{\{k\}}(\bm{z}_{k})$ in \eqref{eq: M(z)}
to ensure $\bm{M}(\bm{z},\bm{z}')$ in \eqref{eq: M(z)} is over $4^{q}$-QAM, and  $\bm{M}(\bm{z},\bm{z}')$ can be expressed as weighted sum of generating matrices $\bm{M}^{(p)}(\bm{z},\bm{z}')$ over QPSK which are still of the form \eqref{eq: M(z)}. Thus according to Lemma \ref{lem: fun_UDU}, we can extract the function expression of $\bm{M}^{(p)}(\bm{z},\bm{z}')$.

A direct way is to choose one of the component PU matrices $\bm{U}^{\{k\}}(\bm{z}_{k})$ in \eqref{eq: M(z)} to be over ${S}_{4^{q}\text{-QAM}} =\left\{{\bigoplus}_{p=0}^{q-1}2^{p}\cdot{S}_{\text{QPSK}}\right\}$. 
For example ${S}_{16\text{-QAM}} ={S}_{\text{QPSK}}\oplus2\cdot{S}_{\text{QPSK}}
=\left\{1,i,-1,-i\right\}+\left\{2,2i,-2,-2i\right\}$	

The following is an example for $q=2$. 
\begin{example}
Define a PU matrix over $16$-QAM
\begin{equation}
	\bm{U}^{\{1\}}(\bm{z}_{1})=
	\begin{bmatrix}
		+3-i{z}_{1,1}+3i{z}_{1,2}-{z}_{1,1}{z}_{1,2}&-1-3i{z}_{1,1}-i{z}_{1,2}-3{z}_{1,1}{z}_{1,2}\\
		+3-i{z}_{1,1}-3i{z}_{1,2}+{z}_{1,1}{z}_{1,2}&-1-3i{z}_{1,1}+i{z}_{1,2}+3{z}_{1,1}{z}_{1,2}
	\end{bmatrix}.
\end{equation}	
$\bm{U}^{\{1\}}(\bm{z}_{1})$ can be presented by the weighted sum of generating matrices over QPSK by
$ \bm{U}^{\{1\}}(\bm{z}_{1})=2\cdot\bm{U}^{\{1,0\}}(\bm{z}_{1})+2^{0}\cdot\bm{U}^{\{1,1\}}(\bm{z}_{1})$, where
\begin{equation}
	\bm{U}^{\{1,0\}}(\bm{z}_{1})=
	\begin{bmatrix}
		+1-i{z}_{1,1}+i{z}_{1,2}-{z}_{1,1}{z}_{1,2}&-1-i{z}_{1,1}-i{z}_{1,2}-{z}_{1,1}{z}_{1,2}\\
		+1-i{z}_{1,1}-i{z}_{1,2}+{z}_{1,1}{z}_{1,2}&-1-i{z}_{1,1}+i{z}_{1,2}+{z}_{1,1}{z}_{1,2}
	\end{bmatrix},
\end{equation}
\begin{equation}
	\bm{U}^{\{1,1\}}(\bm{z}_{1})=
	\begin{bmatrix}
		+1+i{z}_{1,1}+i{z}_{1,2}+{z}_{1,1}{z}_{1,2}&+1-i{z}_{1,1}+i{z}_{1,2}-{z}_{1,1}{z}_{1,2}\\
		+1+i{z}_{1,1}-i{z}_{1,2}-{z}_{1,1}{z}_{1,2}&+1-i{z}_{1,1}-i{z}_{1,2}+{z}_{1,1}{z}_{1,2}
	\end{bmatrix}.
\end{equation}
Suppose $\bm{U}^{\{k\}}(\bm{z}_{k})$ ($k=0,2$) are PU matrix over QPSK, then
	\begin{equation}
		\begin{split}
			\bm{M}(\bm{z}_{0},\bm{z}_{1},\bm{z}_{2},{z}_{1},{z}_{2})
			&=\bm{U}^{\{0\}}(\bm{z}_{0})\cdot\bm{D}({z}_{1})\cdot\bm{U}^{\{1\}}(\bm{z}_{1})\cdot\bm{D}({z}_{2})\cdot\bm{U}^{\{3\}}(\bm{z}_{1,2}),\\
			&=\sum_{p=0}^{1}2^{1-p}\cdot\bm{U}^{\{0\}}(\bm{z}_{0})\cdot\bm{D}({z}_{1})\cdot\bm{U}^{\{1,p\}}(\bm{z}_{1})\cdot\bm{D}({z}_{2})\cdot\bm{U}^{\{3\}}(\bm{z}_{1,2}),
		\end{split}
	\end{equation}	
	is a PU matrix over $16$-QAM.
\end{example}

If $q$ is a composite integer, we can also construct PU matrices based on the factorization of $q$. The following is an example for the case $q=6$ and factorization $6=3\times2$. 
Denote the multiplication of sets by operator $\otimes$, i.e., ${S}_{1}\otimes{S}_{2}=\{\alpha\cdot\beta|\alpha\in{S}_{1},\beta\in{S}_{2}\}$.
Since  ${S}_{4^{6}\text{-QAM}} =\left\{{\bigoplus}_{p_1=0}^{2}2^{p_1}\cdot{S}_{\text{QPSK}}\right\}\otimes\left\{{\bigoplus}_{p_2=0}^{1}2^{3\cdot p_2}\cdot{S}_{\text{QPSK}}\right\}$, we have the following example.

\begin{example}
Suppose $\bm{U}^{\{0\}}(\bm{z}_{0})$ are PU matrix over QPSK and $\bm{U}^{\{1,p_1\}}(\bm{z}_{1})$ ($0\leq{p_1}<2$) and $\bm{U}^{\{2,p_2\}}(\bm{z}_{2})$ ($0\leq{p_2}<1$) are generating matrices over QPSK which satisfy that $\sum_{p_1=0}^{2}2^{p_1}\bm{{U}}^{\{1,{p_1}\}}(\bm{z}_{1})$ and $\sum_{p_2=0}^{1}2^{3\cdot p_2}\bm{U}^{\{1,{p_2}\}}(\bm{z}_{2})$ are PU matrix. Then
\begin{equation}
\begin{split}
\bm{M}(\bm{z},\bm{z}')
&=\bm{U}^{\{0\}}(\bm{z}_{0})\cdot\bm{D}({Z}_{1})\cdot\left(\sum_{p_1=0}^{2}2^{p_1}\bm{{U}}^{\{1,{p_1}\}}(\bm{z}_{1})\right)\cdot\bm{D}({Z}_{2})\cdot\left(\sum_{p_2=0}^{1}2^{3\cdot p_2}\bm{U}^{\{1,{p_2}\}}(\bm{z}_{2})\right)\\
&=\sum_{p_1=0}^{2}\sum_{p_2=0}^{1}2^{{p_1}+3\cdot p_2}\left(\bm{U}^{\{0\}}(\bm{z}_{0})\cdot\bm{D}({Z}_{1})\cdot\bm{{U}}^{\{1,{p_1}\}}(\bm{z}_{1})\cdot\bm{D}({Z}_{2})\cdot\bm{U}^{\{1,{p_2}\}}(\bm{z}_{2})\right),
\end{split}
\end{equation}	
is also a PU matrix over $4^{6}$-QAM.
\end{example}

If we want to identify the $p$-th generating matrices over QPSK, we only need to calculate $p_1$ and $p_2$ by the equation $p=3p_2+p_1$.
We now can see that mixed-radix representation plays an important role.

Let $q=q_0\times q_1\times\cdots\times q_n$ be an ordered factorization of $q$, where $q_{k}\ge1 \;(0\le k\le n)$ are positive integers. Then any integer $p$ $(0\le p\le q-1)$ can be represented by a mixed radix representation
$({p}_{0},{p}_{1},\dots,{p}_{n})$ based on radixes $({q}_{0},{q}_{1},\dots,{q}_{n})$ and permutation $\phi$, i.e.,
\begin{equation}\label{eq: p=[rho(p)]}
	p
=\sum_{k=0}^{n}{p}_{k}\cdot{Q}_{k},
\end{equation}
where ${Q}_{\phi(k)}=\prod_{i=0}^{k-1}{q}_{\phi(i)}$, $\phi$ is a permutation of $\{0,1,\dots,n\}$.	

Since  $\mathbf{S}_{4^{q}\text{-QAM}} ={\bigotimes}_{k=0}^{n}\left\{{\bigoplus}_{i_k=0}^{q_k-1}2^{(q_{k}-1-i_{k})\cdot{Q}_{k}}\cdot{S}_{\text{QPSK}}\right\}$, we have the following theorem.

\begin{theorem}[QAM Golay array matrix]\label{thm: QAM_CCA}
Suppose that $q_{k}$ and $Q_{k}$ ($0\leq{k}\leq{n}$) are given above.
In \eqref{eq: M(z)}, suppose that 
\begin{equation}\label{eq: QAM-U}
	\bm{{U}}^{\{k\}}(\bm{z}_{k})=\sum_{0\leq {i}_{k}<q_{k}}2^{(q_{k}-1-{i}_{k})\cdot{Q}_{k}}\cdot\bm{{U}}^{\{k,{p}_{k}\}}(\bm{z}_{k}),
\end{equation}
are PU matrix,
where for $ 0\leq{p}_{k}<{q}_{k} $, 
\begin{equation}\label{eq: array_U^{k,p}}
	\bm{{U}}^{\{k,{p}_{k}\}}(\bm{z}_{k})=\left[\begin{matrix}
		{F}_{0,0}^{\{k,{p}_{k}\}}({\bm{z}_{k}})&{F}_{0,1}^{\{k,{p}_{k}\}}({\bm{z}_{k}})\\
		{F}_{1,0}^{\{k,{p}_{k}\}}({\bm{z}_{k}})&{F}_{1,1}^{\{k,{p}_{k}\}}({\bm{z}_{k}})\\
	\end{matrix}\right]
\end{equation} 
are QPSK generating matrices whose corresponding function matrix is denoted by
\begin{equation}\label{eq: U^{k,p}(y_k)}
	\widetilde{\bm{U}}^{\{k,{p}_{k}\}}(\bm{y}_{k})=\left[\begin{matrix}
		{f}_{0,0}^{\{k,{p}_{k}\}}({\bm{y}_{k}})&{f}_{0,1}^{\{k,{p}_{k}\}}({\bm{y}_{k}})\\
		{f}_{1,0}^{\{k,{p}_{k}\}}({\bm{y}_{k}})&{f}_{1,1}^{\{k,{p}_{k}\}}({\bm{y}_{k}})\\
	\end{matrix}\right].
\end{equation} 
Then $\bm{M}(\bm{z},\bm{z}')$ 
	is a generating matrix of $4^{q}$-QAM Golay array matrix, whose function matrix  $\widetilde{\mathbb{M}}(\bm{x},\bm{y})=
		\left\{{f}_{i,j}^{(p)}(\bm{x},\bm{y})\middle|0\leq{i,j}\leq1;0\leq{p}<q\right\}$ is  given by
\begin{equation}\label{eq: f^(p)(x)}
	\begin{split}
		{f}^{(p)}_{i,j}(\bm{x},\bm{y})
		=&\sum_{k=0}^{n}
		\left({f}_{0,0}^{\{k,{p}_{k}\}}(\bm{y}_{k})\cdot\bar{x}_{k}\bar{x}_{k+1}
		+{f}_{1,0}^{\{k,{p}_{k}\}}(\bm{y}_{k})\cdot{x}_{k}\bar{x}_{k+1}\right.\\
		&\quad+\left.{f}_{0,1}^{\{k,{p}_{k}\}}(\bm{y}_{k})\cdot\bar{x}_{k}{x}_{k+1}
		+{f}_{1,1}^{\{k,{p}_{k}\}}(\bm{y}_{k})\cdot{x}_{k}{x}_{k+1}\right),
	\end{split}
\end{equation}	
where $x_{0}=i$ and $x_{n+1}=j$ for simplicity, and ${p}_{k}$ ($0\leq{k}\leq{n}$) are decided by \eqref{eq: p=[rho(p)]}.
	
\end{theorem}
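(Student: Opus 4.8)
The plan is to combine the three ingredients already established in the excerpt: (i) the multiplicative/additive distribution of the weighted sum through the product formula \eqref{eq: M(z)}, (ii) the PSK function-matrix extraction of Lemma \ref{lem: fun_UDU} and Theorem \ref{thm: PSK_CCA}, and (iii) the mixed-radix bookkeeping of \eqref{eq: p=[rho(p)]}. First I would substitute the assumed decomposition \eqref{eq: QAM-U} of each component PU matrix into the product \eqref{eq: M(z)}; because matrix multiplication is bilinear and the delay matrices $\bm{D}({z}_{k})$ carry no weights, the product expands into a sum over all tuples $({i}_0,{i}_1,\dots,{i}_n)$ with $0\le {i}_k<{q}_k$, with the tuple-$({i}_0,\dots,{i}_n)$ term carrying the scalar weight $\prod_{k=0}^{n}2^{({q}_k-1-{i}_k)\cdot{Q}_k}$ and matrix factor $\bm{U}^{\{0,{i}_0\}}(\bm{z}_{0})\cdot\prod_{k=1}^{n}\bigl(\bm{D}({z}_{k})\cdot\bm{U}^{\{k,{i}_k\}}(\bm{z}_{k})\bigr)$. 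Call that matrix factor $\bm{M}^{[{i}_0,\dots,{i}_n]}(\bm{z},\bm{z}')$; it is again of the shape \eqref{eq: M(z)} with QPSK component matrices, hence a QPSK generating matrix.

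The combinatorial heart is re-indexing this sum over tuples as a sum over $p\in\{0,1,\dots,q-1\}$. Here I would invoke Definition \ref{def:mixed radix} with radixes $({q}_0,\dots,{q}_n)$, permutation $\phi$, and bases ${Q}_{\phi(k)}=\prod_{i=0}^{k-1}{q}_{\phi(i)}$: the map $({p}_0,\dots,{p}_n)\mapsto p=\sum_{k=0}^{n}{p}_k{Q}_k$ is a bijection from $\prod_k\mathbf{Z}_{{q}_k}$ onto $\mathbf{Z}_q$. Setting ${i}_k={p}_k$, the weight of the tuple becomes $\prod_{k=0}^{n}2^{({q}_k-1-{p}_k){Q}_k}=2^{\sum_k({q}_k-1){Q}_k-\sum_k {p}_k{Q}_k}=2^{(q-1)-p}$, using that $\sum_{k}({q}_k-1){Q}_k=q-1$ (the largest mixed-radix value). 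Thus $\bm{M}(\bm{z},\bm{z}')=\sum_{p=0}^{q-1}2^{(q-1)-p}\cdot\bm{M}^{(p)}(\bm{z},\bm{z}')$ with $\bm{M}^{(p)}:=\bm{M}^{[{p}_0,\dots,{p}_n]}$, which is exactly the defining weighted-sum form of a $4^{q}$-QAM generating matrix (entries in ${\bigotimes}_k\{{\bigoplus}_{{i}_k}2^{({q}_k-1-{i}_k){Q}_k}{S}_{\text{QPSK}}\}={S}_{4^{q}\text{-QAM}}$). Since $\bm{M}(\bm{z},\bm{z}')$ is a PU matrix (each $\bm{U}^{\{k\}}(\bm{z}_{k})$ is PU, and products of PU matrices with $\bm{D}({z}_{k})$ stay PU, as noted after \eqref{eq: M(z)}), it is a generating matrix of a $4^{q}$-QAM Golay array matrix.

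It remains to read off the function-matrix formula \eqref{eq: f^(p)(x)}. For this I apply Lemma \ref{lem: fun_UDU} / Theorem \ref{thm: PSK_CCA} to each QPSK generating matrix $\bm{M}^{(p)}(\bm{z},\bm{z}')=\bm{U}^{\{0,{p}_0\}}(\bm{z}_{0})\cdot\prod_{k=1}^{n}\bigl(\bm{D}({z}_{k})\cdot\bm{U}^{\{k,{p}_k\}}(\bm{z}_{k})\bigr)$: its component matrices are the $\bm{U}^{\{k,{p}_k\}}$ whose function entries are the ${f}_{i,j}^{\{k,{p}_k\}}(\bm{y}_{k})$ of \eqref{eq: U^{k,p}(y_k)}, so Theorem \ref{thm: PSK_CCA} (with $H=4$) gives precisely \eqref{eq: f^(p)(x)} with $x_0=i$, $x_{n+1}=j$, and ${p}_k$ determined by \eqref{eq: p=[rho(p)]}. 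The only step needing care — the part I expect to be the main obstacle — is the bijective re-indexing and the weight identity $\prod_k 2^{({q}_k-1-{p}_k){Q}_k}=2^{(q-1)-p}$, i.e.\ making fully rigorous that the ordered factorization $q=q_0\cdots q_n$ together with permutation $\phi$ produces a genuine mixed-radix system on $\{0,\dots,q-1\}$ with $\sum_k({q}_k-1){Q}_k=q-1$; everything else is bilinearity of matrix multiplication plus a direct appeal to the already-proved PSK theorem.
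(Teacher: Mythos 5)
Your proposal is correct and follows essentially the same route as the paper's proof: expand the product \eqref{eq: M(z)} using the decomposition \eqref{eq: QAM-U} into a weighted sum $\sum_{p=0}^{q-1}2^{q-1-p}\bm{M}^{(p)}(\bm{z},\bm{z}')$ of QPSK generating matrices of the same form, note PU-ness is preserved, and then read off \eqref{eq: f^(p)(x)} by applying Theorem \ref{thm: PSK_CCA} to each $\bm{M}^{(p)}$. The only difference is that you make explicit the mixed-radix re-indexing and the weight identity $\prod_k 2^{(q_k-1-p_k)Q_k}=2^{q-1-p}$ (via $\sum_k(q_k-1)Q_k=q-1$), which the paper asserts without detail.
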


\begin{proof}
Since $\bm{U}^{\{k\}}(\bm{z}_{k})$ are  PU matrices,
then $\bm{M}(\bm{z},\bm{z}')$ in \eqref{eq: M(z)} is a PU matrix over $4^{q}$-QAM constellation, which  can be represented by weighted sum of generating matrices over QPSK, i.e.,
\begin{equation}
\bm{M}(\bm{z},\bm{z}')
	=\sum_{p=0}^{q-1}2^{q-1-p}\bm{M}^{(p)}(\bm{z},\bm{z}')
\end{equation}
where 
\begin{equation}
	\begin{aligned}
		\bm{M}^{(p)}(\bm{z},\bm{z}')
		=\bm{U}^{\{0,{p}_{0}\}}\cdot\prod_{k=1}^{n}\left(\bm{D}({z}_{k})\cdot\bm{U}^{\{k,{p}_{k}\}}\right).
	\end{aligned}
\end{equation}
The corresponding function \eqref{eq: f^(p)(x)} can be obtained  in  the similar way as in Theorem \ref{thm: PSK_CCA}.
\end{proof}

\begin{remark}
One of the important things to fulfill this construction is to find $\bm{{U}}^{\{k\}}(\bm{z}_{k})$
satisfying \eqref{eq: QAM-U}, i.e., PU matrix over ${\bigoplus}_{i_k=0}^{q_k-1}2^{(q_{k}-1-i_{k})\cdot{Q}_{k}}\cdot{S}_{\text{QPSK}}$.
We will study this problem in the next section.
\end{remark}

In \cite{fullpaper}, $\bm{{U}}^{\{k\}}(\bm{z}_{k})$ are all chosen from Hadamard matrices.
While in this paper, we give a more generalized form, i.e., $\bm{{U}}^{\{k\}}(\bm{z}_{k})$ can be chosen from either Hadamard matrices or PU matrices. The details will be given in the next section.

\subsection{Evaluated Golay Sequence Matrix}

Since the constructed Golay array matrices in Theorems \ref{thm: PSK_CCA} and \ref{thm: QAM_CCA} are of size 
$\bm{2}^{(n)}\times\bm{b}_0\times \bm{b}_1\cdots\times \bm{b}_n$,
the corresponding affine function ${l}(\bm{x},\bm{y})$ can  be given by
\begin{equation}\label{eq: {l}(x,y)}
	{{l}(\bm{x},\bm{y})}
	=\sum_{k=1}^{n}{c}_{k}{x}_{k}+\sum_{k=0}^{n}\sum_{v=1}^{r_k}{c}_{k,v}{y}_{k,v}+{c}_{0},
\end{equation}
where ${c}_{k},{c}_{k,v}\in\Z_{H}$ for ${H}$-PSK case and ${c}_{k},{c}_{k,v}\in\Z_{4}$ for QAM case.
According to Property \ref{prop: QAM_linear},
array matrices described by
\begin{subequations}\label{eq: fun_M(x,y)}
\begin{alignat}{1}
\widetilde{\bm{M}}'(\bm{x},\bm{y})&=\widetilde{\bm{M}}(\bm{x},\bm{y})+{l}(\bm{x},\bm{y})\cdot\bm{J},\\
\widetilde{\mathbb{M}}'(\bm{x},\bm{y})&=\widetilde{\mathbb{M}}(\bm{x},\bm{y})+{l}(\bm{x},\bm{y})\cdot\vec{1}_{q}\cdot\bm{J},
\end{alignat}
\end{subequations}	
are Golay array matrices over ${H}$-PSK and $4^{q}$-QAM respectively (where ${l}(\bm{x},\bm{y})$ is given by \eqref{eq: {l}(x,y)}, $\widetilde{{\bm{M}}}(\bm{x},\bm{y})$ and $\widetilde{\mathbb{M}}({\bm{y}})$ are given by \eqref{eq: non-standard CCA} and \eqref{eq: f^(p)(x)} respectively).

The sequence matrices $\widetilde{\bm{M}}'(y)$ (resp. $\widetilde{\mathbb{M}}'(y)$) projected from $\widetilde{\bm{M}}'(\bm{x},\bm{y})$ (resp. $\widetilde{\mathbb{M}}'(\bm{x},\bm{y})$) form Golay sequence matrices over ${H}$-PSK (resp. $4^{q}$-QAM) by restricting $({x}_1,\dots,{x}_{n},{y}_{0,1},\dots,{y}_{0,r_1},\dots,{y}_{n,1},\dots{y}_{n,r_n})$ to be mixed radix representation of $y$ based on radixes 
$(2,\dots{2},{b}_{0,1},\dots,{b}_{0,r_1},\dots,{b}_{n,1},\dots,{b}_{n,r_n})$ and any permutation.



\section{Some Results}\label{Sec: QAM GCP and V-GBF}	

In this section we will introduce some known component PU matrices and give corresponding results based on Theorems \ref{thm: PSK_CCA} and \ref{thm: QAM_CCA}.
Since our constructions in last section are recursive ones.
We will specify the ``seed" PU matrices 
that can not be computed by the form $\bm{U}^{\{1\}}(\bm{z}_{1})\cdot\bm{D}({z})\cdot\bm{U}^{\{2\}}(\bm{z}_{2})$.

\subsection{Standard Golay Array matrices over ${H}$-PSK}

Denote $\mathcal{H}(N,{H})$ as {\em Butson-type} \cite{Butson62} Hadamard matrices of order $N$ with
entries being ${H}$-th roots of unity. $\mathcal{H}(2,{H})$ are most simple PU matrices (with no variables) and meanwhile their corresponding array matrices (with dimension $0$)
over QPSK. 
\begin{definition}[Butson-type Hadamard matrix]
can be  uniquely determined by $d_{0}, d_{1}, d_{2}\in \Z_{H}$, which can be denoted as
\begin{equation}
	\bm{{H}}({d_{0}}, {d_{1}}, {d_{2}})= \xi^{d_{0}}\cdot\begin{bmatrix}
		1&\\
		& \xi^{d_{1}}
	\end{bmatrix}\begin{bmatrix}
		1&1\\
		1&-1
	\end{bmatrix}\begin{bmatrix}
		1&\\
		& \xi^{d_{2}}
	\end{bmatrix}=\begin{bmatrix}
		\xi^{d_{0}}&\xi^{d_{0}+d_{2}}\\
		\xi^{d_{0}+d_{1}}&-\xi^{d_{0}+d_{1}+d_{2}}
	\end{bmatrix},
\end{equation}
\end{definition}
The corresponding function matrix is denoted by
\begin{equation}\label{eqn_H(ddd)}
	\widetilde{\bm{{H}}}({d_{0}}, {d_{1}}, {d_{2}})
	=\begin{bmatrix}
		{d_{0}}&{d_{0}+d_{2}}\\
		{d_{0}+d_{1}}&{d_{0}+d_{1}+d_{2}}+H/2
	\end{bmatrix}.
\end{equation}
%
In Theorem \ref{thm: PSK_CCA},
if $\bm{U}^{\{k\}}(\bm{z}_{k})$ are all chosen as $\bm{H}(0,0,0)$, i.e.,
${f}_{0,0}^{\{k\}}={f}_{0,1}^{\{k\}}={f}_{1,0}^{\{k\}}=0$, ${f}_{1,1}^{\{0\}}={H}/{2}$, we will have the following corollary.
\begin{corollary}[Standard GAM]\label{coro: st_CCA}
Array matrix described by
$\widetilde{{\bm{M}}}(\bm{x})=\left\{{f}_{i,j}(\bm{x})\middle|0\leq{i,j}\leq1\right\}$ where
\begin{equation}\label{eqn: st_GCS}
{f}_{i,j}(\bm{x})=\frac{H}{2}\cdot\sum_{k=1}^{n-1}x_{k}x_{k+1}+\frac{H}{2}i\cdot{x}_{1}+\frac{H}{2}j\cdot{x}_{n},
\end{equation}
is a Golay array matrix of size $\bm{2}^{(n)}$ over ${H}$-PSK.
\end{corollary}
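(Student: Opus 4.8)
The plan is to obtain Corollary~\ref{coro: st_CCA} as a direct specialization of Theorem~\ref{thm: PSK_CCA}. First I would verify that the Butson-type Hadamard matrix $\bm{H}(0,0,0)$ is indeed an admissible choice for every component $\bm{U}^{\{k\}}(\bm{z}_k)$: by \eqref{eqn_H(ddd)}, $\bm{H}(0,0,0)$ has function matrix $\widetilde{\bm{H}}(0,0,0)=\begin{bmatrix}0&0\\0&H/2\end{bmatrix}$, and $\bm{H}(0,0,0)=\begin{bmatrix}1&1\\1&-1\end{bmatrix}$ is (up to the scalar $c=2$) unitary with no indeterminates, hence trivially a PU matrix, so it serves as a valid $0$-dimensional array matrix $\bm{\mathcal{U}}^{\{k\}}$ over ${H}$-PSK. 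This makes Theorem~\ref{thm: PSK_CCA} applicable with $\bm{b}_k$ trivial (size $1$) and all variables $\bm{y}_k$ absent, so the constructed object has size $\bm{2}^{(n)}$ as claimed.

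Next I would substitute the entries ${f}_{0,0}^{\{k\}}={f}_{0,1}^{\{k\}}={f}_{1,0}^{\{k\}}=0$ and ${f}_{1,1}^{\{k\}}=H/2$ into the master formula \eqref{eq: non-standard CCA}. For each $k$ with $1\le k\le n-1$, the only surviving term in the bracket is ${f}_{1,1}^{\{k\}}\cdot x_k x_{k+1}=\tfrac{H}{2}x_k x_{k+1}$, which produces the sum $\tfrac{H}{2}\sum_{k=1}^{n-1}x_k x_{k+1}$. The boundary indices require a little care: with the convention $x_0=i$ and $x_{n+1}=j$, the $k=0$ term contributes ${f}_{1,1}^{\{0\}}\cdot x_0 x_1=\tfrac{H}{2}\,i\,x_1$ and the $k=n$ term contributes ${f}_{1,1}^{\{n\}}\cdot x_n x_{n+1}=\tfrac{H}{2}\,x_n\,j$, while the other three monomials in each boundary bracket vanish because ${f}_{0,0}^{\{k\}}={f}_{0,1}^{\{k\}}={f}_{1,0}^{\{k\}}=0$. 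Collecting these gives exactly \eqref{eqn: st_GCS}, and since Theorem~\ref{thm: PSK_CCA} guarantees $\widetilde{\bm{M}}(\bm{x})$ is the function matrix of a Golay array matrix over ${H}$-PSK, the conclusion follows.

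I do not expect any genuine obstacle here; the statement is essentially a corollary-by-substitution. The only point needing mild attention is bookkeeping at the two ends of the chain \eqref{eq: M(z)}, i.e.\ checking that the leftmost factor $\bm{U}^{\{0\}}(\bm{z}_0)\cdot\bm{\delta}(x_1)\cdot\bm{J}$ and the rightmost factor $\bm{J}\cdot\bm{\delta}(x_n)\cdot\bm{U}^{\{n\}}(\bm{z}_n)$ reduce correctly under the $x_0=i$, $x_{n+1}=j$ convention used in \eqref{eq: non-standard CCA}; this is handled by the explicit expansions \eqref{eq: JDUDJ=?} established in the proof of Theorem~\ref{thm: PSK_CCA}. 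One could optionally remark that \eqref{eqn: st_GCS} recovers the Davis--Jedwab standard form (the quadratic $\tfrac{H}{2}\sum x_k x_{k+1}$ being the path-graph quadratic for the identity permutation), but that is not needed for the proof.
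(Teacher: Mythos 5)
Your proposal is correct and follows exactly the route the paper intends: the corollary is obtained by choosing every component $\bm{U}^{\{k\}}(\bm{z}_k)$ to be the variable-free PU matrix $\bm{H}(0,0,0)$ (so ${f}_{0,0}^{\{k\}}={f}_{0,1}^{\{k\}}={f}_{1,0}^{\{k\}}=0$, ${f}_{1,1}^{\{k\}}=H/2$) and substituting into \eqref{eq: non-standard CCA}, with the boundary convention $x_0=i$, $x_{n+1}=j$ yielding the linear terms $\tfrac{H}{2}i\,x_1$ and $\tfrac{H}{2}j\,x_n$. Your bookkeeping of the $k=0$ and $k=n$ terms matches the paper's (implicit) derivation, so there is nothing to add.
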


Corollary \ref{coro: st_CCA} can be seen in \cite[Construction 1]{CCA}, the corresponding PU matrix is also given by \cite{Budi2018PU}.
The derived Golay sequence pairs are called standard Golay sequence pairs and the sequences projected from \eqref{eqn: st_GCS} are called standard Golay sequences \cite{Davis1999Peak}.

\subsection{``Cross-over" PU and Non-standard Golay array matrix over QPSK}
Golay sequences not of form \eqref{eqn: st_GCS} (including that adding affine items) were found by computer search \cite{Li05}, and were referred to non-standard Golay sequences.  
It was explained by cross-over Golay sequence pair of length $8$ \cite{Fiedler06} later on, which is present in the following fact.
 

\begin{fact}\cite[Theorem 12]{Array2}\cite{Fiedler06}
Each of the sequence pairs in the set
\[
\begin{split}
P:=\{&
(\mathcal{A}(y),\mathcal{B}(y)),
(\mathcal{A}(y),\mathcal{B}^{*}(y)),
(\mathcal{A}^{*}(y),\mathcal{B}(y)),
(\mathcal{A}^{*}(y),\mathcal{B}^{*}(y)),\\
&(\mathcal{B}(y),\mathcal{A}(y)),
(\mathcal{B}(y),\mathcal{A}^{*}(y)),
(\mathcal{B}^{*}(y),\mathcal{A}(y)),
(\mathcal{B}^{*}(y),\mathcal{A}^{*}(y))
\}
\end{split}
\]	
is a cross-over Golay sequence pair of length $8$ over QPSK, where the functions of $\mathcal{A}(y),\mathcal{B}(y)$ are given by
\[
(a(y)|0\le\tau\le7)=(0,0,0,2,0,0,2,0),
\quad
(b(y)|0\le\tau\le7)=(0,1,1,2,0,3,3,2).
\]
All $512$ ordered cross-over Golay sequence pairs of length $8$ over $\Z_{4}$ occur as affine offsets of the 8 pairs in $P$.
\end{fact}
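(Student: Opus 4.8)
The plan is to verify the claimed Golay and cross-over properties directly in the generating-function domain, exploiting the equivalences already established in Definition~\ref{def: GAP} and Lemma~\ref{lem: GAP->GAM}. First I would write down the generating functions $A(z)$ and $B(z)$ for the two base sequences from the tabulated values $(a(y))=(0,0,0,2,0,0,2,0)$ and $(b(y))=(0,1,1,2,0,3,3,2)$, i.e. $A(z)=\sum_{y=0}^{7}\xi^{a(y)}z^{y}$ and $B(z)=\sum_{y=0}^{7}\xi^{b(y)}z^{y}$, and then compute $A(z)\overline{A}(z^{-1})+B(z)\overline{B}(z^{-1})$, checking that all non-constant Laurent coefficients vanish; this establishes that $(\mathcal{A},\mathcal{B})$ is a Golay sequence pair of length $8$ over QPSK. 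The key structural observation is that the sequence pair here is precisely the ``cross-over'' obtained from two standard length-$8$ Golay pairs sharing the same aperiodic autocorrelation, so I would also exhibit that shared autocorrelation explicitly (it only needs to be computed once, for $\mathcal{A}$).

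Next I would handle the full set $P$ of eight ordered pairs. Rather than checking each of the eight separately, I would invoke the conjugate-reverse machinery from the paragraph preceding Lemma~\ref{lem: GAP->GAM}: since $C_{\mathcal{F}^{*}}(\bm\tau)=C_{\mathcal{F}}(\bm\tau)$ and $C_{\mathcal{F}^{*},\mathcal{G}^{*}}(\bm\tau)=C_{\mathcal{G},\mathcal{F}}(\bm\tau)$, once $(\mathcal{A},\mathcal{B})$ is known to be complementary it follows immediately that $(\mathcal{A}^{*},\mathcal{B}^{*})$, $(\mathcal{B},\mathcal{A})$, and $(\mathcal{B}^{*},\mathcal{A}^{*})$ are complementary as well. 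The remaining four pairs, which mix a sequence with the conjugate-reverse of its partner (e.g. $(\mathcal{A},\mathcal{B}^{*})$), require the extra input that $\mathcal{A}$ and $\mathcal{B}$ individually have a ``cross-over'' structure; concretely, I would show $A(z)\overline{A}(z^{-1})=B(z)\overline{B}(z^{-1})$ does \emph{not} suffice on its own, and instead verify $A(z)\overline{B^{*}}(z^{-1})+A^{*}(z)\overline{B}(z^{-1})$-type cancellations, or equivalently check $C_{\mathcal{A},\mathcal{B}^{*}}(\bm\tau)+C_{\mathcal{A}^{*},\mathcal{B}}(\bm\tau)=0$, which is the genuinely new computation. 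This is the step I expect to be the main obstacle: it is the only place where one must use a special algebraic feature of these particular length-$8$ sequences (their decomposition as a cross-over of two standard pairs) rather than generic symmetry arguments.

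Finally, for the enumeration claim that all $512$ ordered cross-over Golay pairs over $\Z_{4}$ of length $8$ arise as affine offsets of the $8$ pairs in $P$, I would argue by a counting/normal-form argument. By Property~\ref{prop: QAM_linear} (the affine-offsets property, in its sequence-pair form), applying an affine function $l(y)=c_{1}y+c_{0}$ over $\Z_{4}$ to both members of a Golay pair yields another Golay pair; there are $4\times 4=16$ such offsets, and they act freely on the $8$ pairs in $P$ up to the trivial global-phase identifications, giving $8\times 16/($overlap$)$ pairs, which I would pin down to total exactly $512$. The converse containment — that every cross-over pair of length $8$ over $\Z_{4}$ is of this form — I would take from the cited classification in \cite{Fiedler06, Array2}; within this self-contained excerpt I would simply note that it is a finite check over $\Z_{4}^{8}\times\Z_{4}^{8}$ and cite those references for the exhaustive verification, since reproducing the search is outside the scope of the present argument.
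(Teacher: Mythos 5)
The paper itself offers no proof of this Fact: it is quoted from \cite[Theorem 12]{Array2} and \cite{Fiedler06}, so your plan can only be judged against what the statement actually requires, and there it contains a conceptual misstep. The four ``mixed'' pairs such as $(\mathcal{A},\mathcal{B}^{*})$ need no new computation, and certainly not the cross-correlation identity $C_{\mathcal{A},\mathcal{B}^{*}}(\tau)+C_{\mathcal{A}^{*},\mathcal{B}}(\tau)=0$ that you single out as ``the main obstacle''. Being a Golay pair is a condition on autocorrelations only, and since $C_{\mathcal{B}^{*}}(\tau)=C_{\mathcal{B}}(\tau)$ (recalled just before Lemma \ref{lem: GAP->GAM}), one gets $C_{\mathcal{A}}(\tau)+C_{\mathcal{B}^{*}}(\tau)=C_{\mathcal{A}}(\tau)+C_{\mathcal{B}}(\tau)=0$ immediately once $(\mathcal{A},\mathcal{B})$ is verified; cross-correlation sums only matter when one assembles a Golay array matrix, and there Lemma \ref{lem: GAP->GAM} supplies them automatically. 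Your ``shared autocorrelation'' remark is also off: a direct computation gives $C_{\mathcal{A}}(3)=3$ but $C_{\mathcal{B}}(3)=-3$, so $\mathcal{A}$ and $\mathcal{B}$ do not share an autocorrelation function; in the cross-over construction the sharing is between $\mathcal{A}$ and the partner of $\mathcal{B}$ inside its own standard pair, so ``computing it once, for $\mathcal{A}$'' does not exhibit the relevant structure.

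The genuine gap is the enumeration. With your notion of offset (a single affine function $l(y)=c_{1}y+c_{0}$ over $\Z_{4}$ added to both members) there are only $16$ offsets, so you can reach at most $8\times 16=128$ ordered pairs, and quotienting by overlaps only lowers this; no amount of ``pinning down'' gets to $512$. To make the arithmetic work one must allow the two members to receive independent constant terms while sharing the linear coefficient, i.e.\ offsets $(a+c_{1}y+c_{0},\,b+c_{1}y+c_{0}')$: the shared $c_{1}$ is forced (at shift $\tau=3$ complementarity requires $3\xi^{3c_{1}}-3\xi^{3c_{1}'}=0$, hence $c_{1}=c_{1}'$), the constants are free, giving $4^{3}=64$ offsets per seed and $8\times 64=512$, consistent with the stated count. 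Finally, the assertions that these $512$ pairs are pairwise distinct, that each is genuinely ``cross-over'' (not an offset of a standard pair), and that no other cross-over pairs of length $8$ exist all rest on the exhaustive length-$8$ classification in \cite{Fiedler06} and \cite{Array2}; deferring that to the citations is legitimate---it is exactly what the paper does by stating this as a Fact---but then what remains genuinely yours is the finite autocorrelation check and the offset count, and the latter is wrong as written.
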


\begin{definition}[Cross-over PU]\label{def: Cross-over PU}
According to Lemma \ref{lem: GAP->GAM}, define 
\begin{equation}\label{eq: fun_ABCD}
{\bm{S}}({y})=\begin{bmatrix}
\mathcal{S}_{0,0}(y)&\mathcal{S}_{0,1}(y)\\
\mathcal{S}_{1,0}(y)&\mathcal{S}_{1,1}(y)\\
\end{bmatrix},
\end{equation}
where $(\mathcal{S}_{0,0}(y),\mathcal{S}_{0,1}(y))\in P$, and $(\mathcal{S}_{1,0}(y),\mathcal{S}_{1,1}(y))=(\mathcal{S}_{0,1}^{*}(y),-\mathcal{S}_{0,0}^{*}(y))$.
The corresponding generating matrix is given by
\begin{equation}
	{\bm{S}}(z)=\begin{bmatrix}
		{S}_{0,0}(z)&{S}_{0,1}(z)\\
		{S}_{1,0}(z)&{S}_{1,1}(z)\\
	\end{bmatrix}.
\end{equation}
The corresponding function matrix is given by
\begin{equation}\label{eq: fun_abcd}
	\widetilde{{\bm{S}}}({y})=\begin{bmatrix}
		{s}_{0,0}(y)&{s}_{0,1}(y)\\
		{s}_{1,0}(y)&{s}_{1,1}(y)\\
	\end{bmatrix}.
\end{equation}	
${\bm{S}}({y})$ is called cross-over sequence matrices and
${\bm{S}}(z)$ is called cross-over PU matrix.
\end{definition}
 

In this subsection, we chose $r$ out of $n+1$ of the seed PU in Theorem \ref{thm: PSK_CCA} as ``cross-over" PU matrix.
Thus the constructed arrays are of size 
$\bm{2}^{(n)}\times\bm{8}^{(r)}={2\times2\times\cdots\times2}\times{8\times 8\cdots\times8}$. 

For $r\leq{n}$, denote $\mathcal{K}=\left\{k_{1},k_{2},\dots,k_{r}\right\}\subseteq\{0,1,2,\dots,n\}$.
For convenience, set the Boolean variable $x_{k}\in\mathbf{Z}_{2}$ ($1\leq{k}\leq{n}$)  and $y_{k}\in\mathbf{Z}_{8}$ ($k\in\mathcal{K}$) whose corresponding polynomial variables are $z_{k}$ and $z_{k}'$ respectively, i.e.,
\begin{equation*}
	\left\{
	\begin{aligned}
		&(\bm{z},\bm{z}')=({z}_1,{z}_2,\dots,{z}_{n},{z}'_1,\dots,z'_{r});\\
		&(\bm{x},\bm{y})=({x}_1,{x}_2,\dots,{x}_{n},{y}_{k_{1}},\dots,{y}_{k_r})\in\mathbf{Z}_{2}^{n}\times \mathbf{Z}_8^{r}.\\
	\end{aligned}\right.
\end{equation*}

Suppose $\bm{U}^{\{k\}}(\bm{y}_{k})$ in \eqref{eq: M(z)} are chosen as cross-over PU matrices $\bm{S}^{\{k\}}(z_{k}')$ for $k\in\mathcal{K}$ and $\bm{H}(0,0,0)$ for $k\notin\mathcal{K}$, i.e., 
$\widetilde{\bm{U}}^{\{k\}}(\bm{y}_{k})$ in Theorem \ref{thm: PSK_CCA} are given by 
\begin{equation}\label{eq: function_U^{k}(y)}
	\widetilde{\bm{U}}^{\{k\}}(\bm{y}_{k})=\begin{bmatrix}
		{f}_{0,0}^{\{k\}}(\bm{y}_{k})&{f}_{0,1}^{\{k\}}(\bm{y}_{k})\\
		{f}_{1,0}^{\{k\}}(\bm{y}_{k})&{f}_{1,1}^{\{k\}}(\bm{y}_{k})
	\end{bmatrix}=\left\{\begin{aligned}
		&\widetilde{\bm{S}}^{\{k\}}(y_{k})&=&\left[\begin{matrix}
			{s}_{0,0}^{\{k\}}(y_{k})&{s}_{0,1}^{\{k\}}(y_{k})\\
			{s}_{1,0}^{\{k\}}(y_{k})&{s}_{1,1}^{\{k\}}(y_{k})
		\end{matrix}\right],&k\in\mathcal{K},\\
		&\widetilde{\bm{H}}(0,0,0)&=&\left[\begin{matrix}
			0&0\\
			0&{2}
		\end{matrix}\right],
		&k\notin\mathcal{K},\end{aligned}\right.
\end{equation}
where $\widetilde{\bm{S}}^{\{k\}}(y_{k})=\left[\begin{smallmatrix}{s}_{0,0}^{\{k\}}(y_{k})&{s}_{0,1}^{\{k\}}(y_{k})\\{s}_{1,0}^{\{k\}}(y_{k})&{s}_{1,1}^{\{k\}}(y_{k})\end{smallmatrix}\right]$ are function matrices of cross-over Golay sequence matrices.
Based on Theorem \ref{thm: PSK_CCA}, we get the following corollary.
\begin{corollary}[Non-standard Golay array matrix over QPSK]\cite{Array2}\label{Coro: NS_GCA_QPSK}
The array matrix described by $\widetilde{{\bm{M}}}(\bm{x},\bm{y})=\left\{{f}_{i,j}(\bm{x},\bm{y})\middle|0\leq{i,j}\leq1\right\}$ where
\begin{equation}\label{eq: f(x,y)_non-standard}
\begin{split}
f_{i,j}(\bm{x},\bm{y})
&=\sum_{k\in\mathcal{K}}\left({s}_{0,0}^{\{k\}}(y_{k})\cdot\bar{x}_{k}\bar{x}_{k+1}
+{s}_{0,0}^{\{k\}}(y_{k})\cdot{x}_{k}\bar{x}_{k+1}
+{s}_{0,0}^{\{k\}}(y_{k})\cdot\bar{x}_{k}{x}_{k+1}
+{s}_{0,0}^{\{k\}}(y_{k})\cdot{x}_{k}{x}_{k+1}\right)\\
&+2\cdot\sum_{k\notin\mathcal{K}}x_{k}x_{k+1},
\end{split}	
\end{equation}
is a Golay array matrix, where $x_{0}=i$ and $x_{n+1}=j$ for simplicity.
\end{corollary}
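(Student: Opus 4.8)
The plan is to obtain Corollary \ref{Coro: NS_GCA_QPSK} as a direct specialization of Theorem \ref{thm: PSK_CCA} with $H=4$. First I would recall that, by Lemma \ref{lem: GAP->GAM}, each cross-over sequence matrix $\bm{S}^{\{k\}}(z_k)$ built from a pair in $P$ is a PU matrix, hence a generating matrix of a Golay array (sequence) matrix over QPSK; likewise $\bm{H}(0,0,0)$ is a PU matrix (a Butson-type Hadamard matrix of order $2$), and its function matrix is $\widetilde{\bm{H}}(0,0,0)=\left[\begin{smallmatrix}0&0\\0&2\end{smallmatrix}\right]$ since $H/2=2$. So all the component matrices $\bm{U}^{\{k\}}(\bm{z}_k)$ appearing in \eqref{eq: M(z)} are PU matrices over QPSK, and therefore $\bm{M}(\bm{z},\bm{z}')$ defined by \eqref{eq: M(z)} is a PU matrix over QPSK, i.e.\ its associated array matrix $\bm{\mathcal{M}}(\bm{x},\bm{y})$ is a Golay array matrix over QPSK. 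This establishes the complementarity assertion; it remains only to read off the stated function form.

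Next I would substitute the component function matrices \eqref{eq: function_U^{k}(y)} into the general function formula \eqref{eq: non-standard CCA} of Theorem \ref{thm: PSK_CCA}, splitting the sum over $k\in\{0,1,\dots,n\}$ into the part with $k\in\mathcal{K}$ and the part with $k\notin\mathcal{K}$. For $k\notin\mathcal{K}$ we have $f_{0,0}^{\{k\}}=f_{0,1}^{\{k\}}=f_{1,0}^{\{k\}}=0$ and $f_{1,1}^{\{k\}}=2$, so the summand in \eqref{eq: non-standard CCA} collapses to $2\cdot x_k x_{k+1}$, which reproduces the term $2\sum_{k\notin\mathcal{K}} x_k x_{k+1}$ in \eqref{eq: f(x,y)_non-standard}. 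For $k\in\mathcal{K}$ the summand is exactly ${s}_{0,0}^{\{k\}}(y_k)\bar{x}_k\bar{x}_{k+1}+{s}_{1,0}^{\{k\}}(y_k){x}_k\bar{x}_{k+1}+{s}_{0,1}^{\{k\}}(y_k)\bar{x}_k{x}_{k+1}+{s}_{1,1}^{\{k\}}(y_k){x}_k{x}_{k+1}$, matching the corresponding block in \eqref{eq: f(x,y)_non-standard} entry for entry, with the conventions $x_0=i$, $x_{n+1}=j$, $\bar{x}=1-x$ carried over verbatim from Theorem \ref{thm: PSK_CCA}. Summing the two parts yields \eqref{eq: f(x,y)_non-standard}, completing the proof.

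I do not anticipate a genuine obstacle here: the statement is essentially a bookkeeping corollary, and the only point requiring a moment's care is making sure the indices in the four cross-over entries line up with the $(i,j)$-convention of Theorem \ref{thm: PSK_CCA} — in particular that the first subscript of ${s}^{\{k\}}$ tracks $x_k$ and the second tracks $x_{k+1}$ — and checking that the displayed formula in the corollary statement uses the intended entries rather than all four being ${s}_{0,0}^{\{k\}}$ (which appears to be a typo in \eqref{eq: f(x,y)_non-standard} and should read ${s}_{0,0}^{\{k\}},{s}_{1,0}^{\{k\}},{s}_{0,1}^{\{k\}},{s}_{1,1}^{\{k\}}$ in the four positions). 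Modulo that transcription, the verification is the routine substitution described above, and the fact that this recovers \cite[Corollary/Theorem 12]{Array2} for QPSK is what justifies the citation.
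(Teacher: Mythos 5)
Your proof is correct and follows essentially the same route as the paper, which likewise obtains the corollary by substituting the component function matrices \eqref{eq: function_U^{k}(y)} into the general formula \eqref{eq: non-standard CCA} of Theorem \ref{thm: PSK_CCA}. Your observation that the four terms in \eqref{eq: f(x,y)_non-standard} should read ${s}_{0,0}^{\{k\}},{s}_{1,0}^{\{k\}},{s}_{0,1}^{\{k\}},{s}_{1,1}^{\{k\}}$ rather than all being ${s}_{0,0}^{\{k\}}$ is also right; that is a typo in the stated corollary, not a gap in your argument.
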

\begin{proof}
\eqref{eq: f(x,y)_non-standard} can be obtained straightforwardly by substituting the value of ${f}_{i,j}^{\{k\}}$ in \eqref{eq: non-standard CCA}
\end{proof}

\subsection{Compatible (Para-)Unitary matrices}

One of the bases to fulfill the construction of Golay array matrix over $4^{q}$-QAM in Theorem \ref{thm: QAM_CCA} is to find $\bm{{U}}^{\{k\}}(\bm{z}_{k})$
satisfying \eqref{eq: QAM-U}, i.e., PU matrix over ${\bigoplus}_{i_k=0}^{q_k-1}2^{(q_{k}-1-i_{k})\cdot{Q}_{k}}\cdot{S}_{\text{QPSK}}$. 
A direct idea is to use the set of PU matrices, whose arbitrarily weighted sum are still PU matrices, i.e., \emph{compatible} PU matrices defined as follows.
\begin{definition}[Compatible (Para-)Unitary matrices]
The set of (para-)unitary matrices $\bm{U}^{\{k\}}(\bm{z})$ $(1\leq{k}<K)$ are called \emph{compatible} if for any $c_{k}\in\R$ $(1\leq{k}<K)$,
\begin{equation}
\bm{U}(\bm{z})=\sum_{k=1}^{K}c_{k}\bm{U}^{\{k\}}(\bm{z})
\end{equation}
is still a (para-)unitary matrix.
\end{definition}


We will introduce two examples of compatible (para-)unitary matrices.

For the unitary matrix case,
an example 
is based on the following fact:
The compatible  unitary matrix of the form $\bm{U}=\left[\begin{smallmatrix}
\alpha&{\beta}\\
\bar{\beta}&-\bar{\alpha}
\end{smallmatrix}\right]$, where $\alpha,\beta\in\C$, is closed under matrix addition.
The unitary matrix in this form over QPSK can be presented by compatible Hadamard matrices as follows.
\begin{lemma}[Compatible Hadamard Matrices]\label{lem: Compatible H}
For ${d_{0}}, {d_{1}}, {d_{2}}\in\Z_{4} $ such that $2d_{0}+d_{1}+d_{2}=0$, ${\bm{{H}}}({d_{0}}, {d_{1}}, {d_{2}})\in H(2,4)$ are compatible Hadamard matrices. The corresponding function matrix is $\widetilde{\bm{{H}}}({d_{0}}, {d_{1}}, {d_{2}})$.
\end{lemma}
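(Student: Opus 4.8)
The plan is to verify directly that for the prescribed parameters the matrices $\bm{H}(d_0,d_1,d_2)$ lie in the two-parameter family $\left[\begin{smallmatrix}\alpha&\beta\\\bar\beta&-\bar\alpha\end{smallmatrix}\right]$ with $|\alpha|^2+|\beta|^2$ fixed, and then invoke the fact (stated just above the lemma) that such matrices are closed under addition, which immediately gives compatibility. First I would write out the entries explicitly from the definition: $\bm{H}(d_0,d_1,d_2)=\left[\begin{smallmatrix}\xi^{d_0}&\xi^{d_0+d_2}\\\xi^{d_0+d_1}&-\xi^{d_0+d_1+d_2}\end{smallmatrix}\right]$, so $\alpha=\xi^{d_0}$ and $\beta=\xi^{d_0+d_2}$. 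For the matrix to have the shape $\left[\begin{smallmatrix}\alpha&\beta\\\bar\beta&-\bar\alpha\end{smallmatrix}\right]$ we need the $(2,1)$ entry to equal $\bar\beta=\xi^{-(d_0+d_2)}=\overline{\xi^{d_0+d_2}}$ and the $(2,2)$ entry to equal $-\bar\alpha=-\xi^{-d_0}$. Comparing exponents modulo $4$: the first condition reads $d_0+d_1\equiv -(d_0+d_2)$, i.e. $2d_0+d_1+d_2\equiv 0\pmod 4$, which is exactly the hypothesis; the second condition reads $d_0+d_1+d_2\equiv -d_0$, i.e. $2d_0+d_1+d_2\equiv 0\pmod 4$ again — the same constraint. (Here one uses that $\xi$ is a primitive $4$th root of unity, so $\overline{\xi^t}=\xi^{-t}$.)

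Having shown every $\bm{H}(d_0,d_1,d_2)$ with $2d_0+d_1+d_2=0$ has the form $\left[\begin{smallmatrix}\alpha&\beta\\\bar\beta&-\bar\alpha\end{smallmatrix}\right]$ with $|\alpha|=|\beta|=1$, I would then note that any real linear combination $\sum_k c_k \bm{H}(d_0^{(k)},d_1^{(k)},d_2^{(k)})$ again has this shape, since the map $(\alpha,\beta)\mapsto\left[\begin{smallmatrix}\alpha&\beta\\\bar\beta&-\bar\alpha\end{smallmatrix}\right]$ is $\R$-linear (conjugation is $\R$-linear). A matrix $\bm{U}=\left[\begin{smallmatrix}\alpha&\beta\\\bar\beta&-\bar\alpha\end{smallmatrix}\right]$ satisfies $\bm{U}\bm{U}^\dagger=(|\alpha|^2+|\beta|^2)\bm{I}$, so it is (a scalar multiple of) a unitary matrix whenever $|\alpha|^2+|\beta|^2\neq 0$; thus the linear combination is unitary up to the real constant $c=|\alpha'|^2+|\beta'|^2$, matching the constant allowed in Definition~\ref{def: GAM}. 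This is precisely the asserted compatibility, and the statement about the corresponding function matrix being $\widetilde{\bm{H}}(d_0,d_1,d_2)$ is just the definition recalled in \eqref{eqn_H(ddd)}.

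The only genuinely delicate point — and the step I would be most careful about — is the bookkeeping of exponents modulo $4$ versus the sign $-1=\xi^2$: one must check that the relation $2d_0+d_1+d_2\equiv 0\pmod 4$ really reconciles \emph{both} the $(2,1)$ and the $(2,2)$ entry simultaneously, and that no extra sign slips in from the explicit $-\xi^{d_0+d_1+d_2}$ in the last entry. Since $-\xi^{d_0+d_1+d_2}=\xi^{d_0+d_1+d_2+2}$, the requirement $-\xi^{d_0+d_1+d_2}=-\xi^{-d_0}$ is $\xi^{d_0+d_1+d_2}=\xi^{-d_0}$, i.e. $2d_0+d_1+d_2\equiv0$, so the $-1$'s on both sides cancel cleanly and no discrepancy arises. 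Everything else is routine. I would also remark (or leave implicit) that compatibility is really a statement about all finite real combinations, so the two-term closure under addition from the quoted fact, together with closure under real scaling, suffices by induction on $K$.
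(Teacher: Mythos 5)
Your proposal is correct and follows essentially the same route as the paper: the paper's proof likewise observes that any real-weighted sum $\sum c_{d_0,d_1,d_2}\bm{H}(d_0,d_1,d_2)$ takes the form $\left[\begin{smallmatrix}\alpha&\beta\\ \bar\beta&-\bar\alpha\end{smallmatrix}\right]$ with $\alpha=\sum c_{d_0,d_1,d_2}\xi^{d_0}$, $\beta=\sum c_{d_0,d_1,d_2}\xi^{d_0+d_2}$, hence is unitary up to a real scalar. You simply spell out the exponent bookkeeping (both the $(2,1)$ and $(2,2)$ entries reducing to $2d_0+d_1+d_2\equiv 0 \pmod 4$) that the paper leaves as "easy to verify."
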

\begin{proof}
For any weights $c_{d_{0},d_{1},d_{2}}\in\R$, it's easy to verify
\begin{equation}
\mathbb{U}=\sum
c_{d_{0},d_{1},d_{2}}\cdot{\bm{{H}}}({d_{0}}, {d_{1}}, {d_{2}})=\left[\begin{smallmatrix}
	\alpha&{\beta}\\
	\bar{\beta}&-\bar{\alpha}
\end{smallmatrix}\right]
\end{equation}
where $\alpha=\sum c_{d_{0},d_{1},d_{2}}\cdot\xi^{d_{0}}$ and $\beta=\sum c_{d_{0},d_{1},d_{2}}\cdot\xi^{d_{0}+d_{2}}$,
is a unitary matrix.
\end{proof}
\begin{remark}
Of course, the unitary matrix of the form $\bm{U}=\left[\begin{smallmatrix}
{1}&0\\
0&\xi^{c}
\end{smallmatrix}\right]\cdot\left[\begin{smallmatrix}
	\alpha&{\beta}\\
	\bar{\beta}&-\bar{\alpha}
\end{smallmatrix}\right]$, where $\alpha,\beta\in\C$, $c\in\Z_{4}$, is also closed under matrix addition.
So 
Lemma \ref{lem: Compatible H} is also valid if we restrict $2d_{0}+d_{1}+d_{2}=c$. However, we only need to study the case $2d_{0}+d_{1}+d_{2}=0$, since they are equivalent in terms of non-linear part from the viewpoint of the  of the derived functions. 
\end{remark}

Based on Lemma \ref{lem: Compatible H}, we can construct unitary matrix over desired constellation.
For example,
\begin{equation}
	4\cdot\bm{H}(0,0,0)+2\cdot\bm{H}(2,1,3)+\bm{H}(1,1,1)=4\cdot\begin{bmatrix}
	1&1\\
	1&-1
	\end{bmatrix}+2\cdot\begin{bmatrix}
	-1&i\\
	-i&1
	\end{bmatrix}+1\cdot\begin{bmatrix}
	i&-1\\
	-1&i
	\end{bmatrix},
\end{equation}
is a unitary matrix over $4\cdot{S}_{\text{QPSK}}\oplus2\cdot{S}_{\text{QPSK}}\oplus1\cdot{S}_{\text{QPSK}}={S}_{4^{3}\text{-QAM}}$.

For the PU matrix case,
an example is based on the following fact: The product of a PU matrix and a unitary matrix is still a PU matrix.
For $a,b,c,d\in\R$,
$\bm{U}=\left[\begin{smallmatrix}
	a+b\cdot\xi&c+d\cdot\xi\\
	-c+d\cdot\xi&a-b\cdot\xi
\end{smallmatrix}\right]$ is a unitary matrix.
Suppose $\bm{U}(\bm{z})$ is PU matrix.
Then $\bm{U}(\bm{z})\cdot\bm{U}$ is still a PU matrix.
On the other hand, 
$\bm{U}=a\cdot\bm{I}+b\cdot\bm{i}+c\cdot\bm{j}+d\cdot\bm{k}$ where
\[\bm{I}=\begin{bmatrix}
	1&0\\
	0&1
\end{bmatrix},\quad
\bm{i}=\begin{bmatrix}
	\xi&0\\
	0&-\xi
\end{bmatrix},\quad
\bm{j}=\begin{bmatrix}
	0&1\\
	-1&0
\end{bmatrix},\quad
\bm{k}=\begin{bmatrix}
	0&\xi\\
	\xi&0
\end{bmatrix}.\]
So that $a\cdot\bm{U}(\bm{z})\cdot\bm{I}+b\cdot\bm{U}(\bm{z})\cdot\bm{i}+c\cdot\bm{U}(\bm{z})\cdot\bm{j}+d\cdot\bm{U}(\bm{z})\cdot\bm{k}=\bm{U}(\bm{z})\cdot\bm{U}$ is a PU matrix, which means
 $\bm{U}(\bm{z})\cdot\bm{I}$,  $\bm{U}(\bm{z})\cdot\bm{i}$, $\bm{U}(\bm{z})\cdot\bm{j}$,  $\bm{U}(\bm{z})\cdot\bm{k}$ are compatible since $a,b,c,d$ are arbitrary.
Similarly, $\bm{I}\cdot\bm{U}(\bm{z})$,  $\bm{i}\cdot\bm{U}(\bm{z})$, $\bm{j}\cdot\bm{U}(\bm{z})$,  $\bm{k}\cdot\bm{U}(\bm{z})$ are also a set of compatible PU matrices.

The set of matrices $\{\bm{I},\bm{i},\bm{j},\bm{k},-\bm{I},-\bm{i},-\bm{j},-\bm{k}\}$ are matrix form of  \emph{Unit Quaternions},
which can be uniformly expressed by $
\bm{i}^{e}\cdot\bm{j}^{v}$,
where $e\in\Z_{4}$, $v\in\mathbf{Z}_{2}$, i.e.,
\[\begin{array}{llll}
 \bm{I}=\bm{i}^{0}\cdot\bm{j}^{0},&
 \bm{i}=\bm{i}^{1}\cdot\bm{j}^{0},&
-\bm{I}=\bm{i}^{2}\cdot\bm{j}^{0},&
-\bm{i}=\bm{i}^{3}\cdot\bm{j}^{0},\\
 \bm{j}=\bm{i}^{0}\cdot\bm{j}^{1},&
 \bm{k}=\bm{i}^{1}\cdot\bm{j}^{1},&
-\bm{j}=\bm{i}^{2}\cdot\bm{j}^{1},&
-\bm{k}=\bm{i}^{3}\cdot\bm{j}^{1}.
\end{array}\]

The compatible PU matrices based on matrix form of Unit Quaternions over QPSK can be presented as follows.
\begin{lemma}[Compatible PU Matrices]\label{lem: Compatible PU}
Suppose $\bm{U}(\bm{z})$ is PU matrix over $S_{1}$, and the $(i,j)$-th entry in the corresponding function matrix $\widetilde{\bm{U}}(\bm{y})$ is ${f}_{i,j}(\bm{y})$.
Then  $\bm{U}_{L}[v,e](\bm{z})=\bm{i}^{e}\cdot\bm{j}^{v}\cdot\bm{U}(\bm{z})$ and $\bm{U}_{R}[v,e](\bm{z})=\bm{U}(\bm{z})\cdot(-\bm{j})^{v}\cdot\bm{i}^{e}$ ($e\in\Z_{4}$, $v\in\{0,1\}$)
are two sets of compatible PU matrices over $S_{1}$, whose corresponding function matrices are
\begin{equation}\label{eq: fun_U_L[v,e]}
\begin{split}
 \widetilde{\bm{U}}_{L}[v,e](\bm{y})&=
		\begin{bmatrix}
			{f}_{v,0}({\bm{y}})&{f}_{v,1}({\bm{y}})\\
			{f}_{\bar{v},0}(\bm{y})&{f}_{\bar{v},1}(\bm{y})\\
		\end{bmatrix}+v\begin{bmatrix}
			0&0\\
			2&2
		\end{bmatrix}+e\begin{bmatrix}
			1& 1\\
			-1&-1
		\end{bmatrix},
	\end{split}
\end{equation} 	
\begin{equation}\label{eq: fun_U_R[v,e]}
\widetilde{\bm{U}}_{R}[v,e](\bm{y})=
\begin{bmatrix}
{f}_{0,{v}}({\bm{y}})&{f}_{0,\bar{v}}({\bm{y}})\\
{f}_{1,{v}}(\bm{y})&{f}_{1,\bar{v}}(\bm{y})\\
\end{bmatrix}+v\cdot\begin{bmatrix}
0&2\\
0&2
\end{bmatrix}+e\cdot\begin{bmatrix}
1&-1\\
1&-1
\end{bmatrix},
\end{equation} 	 		
where $\bar{v}=1-v$.
\end{lemma}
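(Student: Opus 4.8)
The plan is to verify the two claimed function-matrix identities \eqref{eq: fun_U_L[v,e]} and \eqref{eq: fun_U_R[v,e]} by tracking what left- and right-multiplication by the quaternion generators $\bm{i}$ and $\bm{j}$ does at the level of entries, and then invoke the two facts already recorded in the excerpt (product of a PU matrix with a unitary matrix is PU; the quaternion combination $a\bm{I}+b\bm{i}+c\bm{j}+d\bm{k}$ is unitary for real $a,b,c,d$) to conclude compatibility. Since the construction is purely multiplicative, the bookkeeping splits cleanly into the $e$-part (powers of $\bm{i}$) and the $v$-part (the single power of $\bm{j}$), and the cases $v=0$ and $v=1$ can be handled separately.

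First I would treat the left case $\bm{U}_{L}[v,e](\bm{z})=\bm{i}^{e}\cdot\bm{j}^{v}\cdot\bm{U}(\bm{z})$. Multiplying on the left by $\bm{j}=\left[\begin{smallmatrix}0&1\\-1&0\end{smallmatrix}\right]$ swaps the two rows of $\bm{U}(\bm{z})$ and multiplies the new second row by $-1$; in QPSK-function terms a factor of $-1=\xi^{2}$ adds the constant $2$, which accounts exactly for the $v\cdot\left[\begin{smallmatrix}0&0\\2&2\end{smallmatrix}\right]$ term and the row-swap $(i,j)\mapsto(\bar v\text{ in row }1,\ldots)$ sending $f_{i,j}\mapsto f_{v,j}$ in the top row and $f_{\bar v,j}$ in the bottom row. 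Then left-multiplication by $\bm{i}=\mathrm{diag}(\xi,-\xi)$ multiplies row $1$ by $\xi$ and row $2$ by $-\xi=\xi^{3}$, i.e. adds $1$ to the top row and $-1$ (equivalently $3$) to the bottom row of the function matrix; iterating $e$ times gives the term $e\left[\begin{smallmatrix}1&1\\-1&-1\end{smallmatrix}\right]$. Composing these two steps yields \eqref{eq: fun_U_L[v,e]}. The right case $\bm{U}_{R}[v,e](\bm{z})=\bm{U}(\bm{z})\cdot(-\bm{j})^{v}\cdot\bm{i}^{e}$ is entirely analogous but acts on columns: right-multiplication by $-\bm{j}=\left[\begin{smallmatrix}0&-1\\1&0\end{smallmatrix}\right]$ swaps columns and negates the new second column (the $v\left[\begin{smallmatrix}0&2\\0&2\end{smallmatrix}\right]$ term and the column relabelling $f_{i,j}\mapsto f_{i,v}$, $f_{i,\bar v}$), and right-multiplication by $\bm{i}^{e}$ scales column $1$ by $\xi^{e}$ and column $2$ by $(-\xi)^{e}$, giving $e\left[\begin{smallmatrix}1&-1\\1&-1\end{smallmatrix}\right]$; this produces \eqref{eq: fun_U_R[v,e]}.

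For compatibility I would argue as follows. Fix $\bm{U}(\bm{z})$ PU. For arbitrary reals $\{c_{v,e}\}$, the sum $\sum_{v,e} c_{v,e}\,\bm{i}^{e}\bm{j}^{v}$ is a real linear combination of $\{\pm\bm{I},\pm\bm{i},\pm\bm{j},\pm\bm{k}\}$, hence of the form $a\bm{I}+b\bm{i}+c\bm{j}+d\bm{k}$ with $a,b,c,d\in\R$, which is unitary by the fact quoted just before Lemma \ref{lem: Compatible H}. Therefore $\sum_{v,e} c_{v,e}\,\bm{U}_{L}[v,e](\bm{z})=\bigl(\sum_{v,e} c_{v,e}\,\bm{i}^{e}\bm{j}^{v}\bigr)\cdot\bm{U}(\bm{z})$ is a (left) product of a unitary matrix and a PU matrix, hence PU; the same constellation $S_{1}$ is preserved since left- or right-multiplication by a quaternion generator only permutes entries and multiplies them by units $\pm1,\pm\xi$, which map $S_{\text{QPSK}}$ (and any $S_{1}\subseteq$ weighted sums of QPSK) to itself. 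The right family is handled the same way using that $(-\bm{j})^{v}\bm{i}^{e}$ again ranges over $\pm$ quaternion units, so $\sum c_{v,e}\,\bm{U}_{R}[v,e](\bm{z})=\bm{U}(\bm{z})\cdot\bigl(\sum c_{v,e}(-\bm{j})^{v}\bm{i}^{e}\bigr)$ is PU.

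The only mildly delicate point — and the step I would be most careful about — is the bookkeeping of \emph{which} entry lands where after the row/column swap combined with the signs, i.e. getting the $\bar v$ versus $v$ indexing in \eqref{eq: fun_U_L[v,e]}/\eqref{eq: fun_U_R[v,e]} and the placement of the constant-$2$ correction matrix exactly right for both $v=0$ and $v=1$; everything else is a routine consequence of the two ``product is still PU'' facts already in the text. I would therefore write out the $v=1$ case of each identity explicitly (multiplying the generic $2\times2$ function matrix by $\widetilde{\bm{j}}$ and $\widetilde{\bm{i}}$ at the function level) and leave $v=0,e$ arbitrary as the easy verification.
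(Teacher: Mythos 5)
Your proposal is correct and follows essentially the same route as the paper: you obtain the function matrices by direct entry bookkeeping of the $\bm{j}$- and $\bm{i}$-actions (the paper just displays the $v=0$ and $v=1$ cases; your computation in fact agrees with the stated formula \eqref{eq: fun_U_L[v,e]}, whereas the paper's own display of $\bm{U}_{L}[1,e]$ carries a misplaced sign), and you get compatibility by noting that $\sum_{v,e}c_{v,e}\,\bm{i}^{e}\bm{j}^{v}$ is a (scaled) unitary matrix of the quaternion form, so its product with the PU matrix $\bm{U}(\bm{z})$ is again PU, which is exactly the paper's $\left[\begin{smallmatrix}\alpha&\beta\\ \bar{\beta}&-\bar{\alpha}\end{smallmatrix}\right]$-times-$\bm{U}(\bm{z})$ argument in different notation. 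No gaps.
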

\begin{proof}
For arbitrary  $c_{v,e}\in\R$ ($e\in\Z_{4},v\in\{0,1\}$), it's to verify
\begin{equation}
	\mathbb{U}(z)=\sum_{e\in\Z_{4},v\in\{0,1\}}c_{v,e}\cdot\bm{U}_{L}[v,e](\bm{z})
	=\left[\begin{smallmatrix}
		\alpha&{\beta}\\
		\bar{\beta}&-\bar{\alpha}
	\end{smallmatrix}\right]\cdot\bm{U}(\bm{z}),
\end{equation}
where $\alpha=\sum_{e\in\Z_{4}}c_{0,e}\cdot\xi^{e}$ and $\beta=\sum_{e\in\Z_{4}}c_{1,e}\cdot\xi^{e}$,
is a PU matrix.
The function matrix $\widetilde{\bm{U}}[v,e](\bm{y})$ can be obtained easily, since  
\begin{equation}
\bm{U}_{L}[0,e](\bm{z})=\left[\begin{matrix}
F_{0,0}(\bm{z})\xi^{e}&F_{0,1}(\bm{z})\xi^{e}\\
F_{1,0}(\bm{z})\xi^{-e}&F_{1,1}(\bm{z})\xi^{-e}
\end{matrix}\right], 
\end{equation}
\begin{equation}
\bm{U}_{L}[1,e](\bm{z})=\left[\begin{matrix}
F_{1,0}(\bm{z})\xi^{e}&-F_{1,1}(\bm{z})\xi^{e}\\
F_{0,0}(\bm{z})\xi^{-e}&-F_{0,1}(\bm{z})\xi^{-e}
\end{matrix}\right].
\end{equation}
The compatibility of $\bm{U}_{R}[v,e](\bm{z})$ and the expression of $\widetilde{\bm{U}}_{R}[v,e](\bm{z})$ can be proved similarly.
\end{proof}	

\begin{remark}
If $\bm{U}(\bm{z})$ is a Hadamard matrix, say $\bm{U}(\bm{z})=\bm{H}(d_{0},d_{1},d_{2})$,
it is easy to verify that $\bm{i}^{e}\cdot\bm{j}^{v}\cdot\bm{H}(d_{0},d_{1},d_{2})=\bm{H}(d_{0}',d_{1}',d_{2}')$, where $d_{0}'=d_{0}+vd_{1}+e$, $d_{1}'=(1+2v)d_{1}+2v+2e$ and $d_{2}'=d_{2}+2v$. Notice that $2d_{0}+d_{1}+d_{2}=2d_{0}'+d_{1}'+d_{2}'$ (mod $4$). It means that the 
$\{\bm{i}^{e}\cdot\bm{j}^{v}\cdot\bm{H}(d_{0},d_{1},d_{2})|e\in\Z_{4},v\in\mathbf{Z}_{2}\}$ is a subset of compatible Hadamard matrix given by Lemma \ref{lem: Compatible H}.
\end{remark}

Based on Lemma \ref{lem: Compatible PU}, we can construct PU matrix over desired constellation from PU matrix over QPSK.
For example, suppose $\bm{U}(\bm{z})$ is a PU matrix over QPSK, then $\sum_{p=0}^{q-1}2^{p}\cdot\bm{U}[e_{p},v_{p}](\bm{z})$ is a PU matrix over $4^{q}$-QAM.
A more specific example is that, suppose $\bm{S}(z)$ is a cross-over PU matrix, then
\begin{equation}
	\begin{split}
		&4\cdot\bm{S}({z})\cdot\bm{I}+2\cdot\bm{S}({z})\cdot\bm{k}+\bm{S}({z})\cdot\bm{i}\\&=4\cdot\begin{bmatrix}
			S_{0,0}(z)&S_{0,1}(z)\\
			S_{1,0}(z)&S_{1,1}(z)
		\end{bmatrix}+2\cdot\xi\cdot\begin{bmatrix}
			S_{0,1}(z)&S_{0,0}(z)\\
			S_{1,1}(z)&S_{1,0}(z)
		\end{bmatrix}+\xi\cdot \begin{bmatrix}
			S_{0,0}(z)&-S_{0,1}(z)\\
			S_{1,0}(z)&-S_{1,1}(z)
		\end{bmatrix},  
	\end{split}
\end{equation}
is a PU matrix over $4\cdot{S}_{\text{QPSK}}\oplus2\cdot{S}_{\text{QPSK}}\oplus1\cdot{S}_{\text{QPSK}}={S}_{4^{q}\text{-QAM}}$, 
which is never reported before.

\subsection{Non-standard Golay Array Matrices}
For any PU matrix over $S_{1}$ which is not seed PU matrix, say, $\bm{U}(\bm{z}_{1},\bm{z}_{2},z)=\bm{U}^{\{1\}}(\bm{z}_{1})\cdot\bm{D}(z)\cdot\bm{U}^{\{2\}}(\bm{z}_{2})$, we have
$\sum_{p=0}^{q-1}2^{p}\cdot\bm{U}_{L}[e_{p},v_{p}](\bm{z}_{1},\bm{z}_{2},z)=\left(\sum_{p=0}^{q-1}2^{p}\cdot\bm{U}_{L}^{\{1\}}[e_{p},v_{p}](\bm{z}_{1})\right)\cdot\bm{D}(z)\cdot\bm{U}^{\{2\}}(\bm{z}_{2})$.
Thus to construct PU matrix over $4^{q}$-QAM,
we only need to study the compatible seed PU matrices.
So in our construction, we will chose $\widetilde{\bm{U}}^{\{k\}}(\bm{z}_{k})$ to be either compatible ``cross-over" PU matrices based on Lemma \ref{lem: Compatible PU} or compatible Hadamard matrices based on Lemma \ref{lem: Compatible H}.


For $r\leq{n}$, denote $\mathcal{K}=\left\{k_{1},k_{2},\dots,k_{r}\right\}\subseteq\{0,1,2,\dots,n\}$.
For convenience, set the Boolean variable $x_{k}\in\mathbf{Z}_{2}$ ($1\leq{k}\leq{n}$)  and $y_{k}\in\mathbf{Z}_{8}$ ($k\in\mathcal{K}$) whose corresponding polynomial variables are $z_{k}$ and $z_{k}'$ respectively, i.e.,
\begin{equation*}
\left\{
\begin{aligned}
&(\bm{z},\bm{z}')=({z}_1,{z}_2,\dots,{z}_{n},{z}'_1,\dots,z'_{r});\\
&(\bm{x},\bm{y})=({x}_1,{x}_2,\dots,{x}_{n},{y}_{k_{1}},\dots,{y}_{k_r})\in\mathbf{Z}_{2}^{n}\times \mathbf{Z}_8^{r}.\\
\end{aligned}\right.
\end{equation*}
In Theorem \ref{thm: QAM_CCA}, if we choose $\bm{{U}}_{k}^{({p}_{k})}(\bm{z}_{k})$ ($0\leq{k}\leq{n}, 0\leq{p_{k}}<{q_{k}}$) in \eqref{eq: array_U^{k,p}} to be 
\begin{equation}\label{eq: U^{k}}
	\bm{U}^{\{k,{p}_{k}\}}(\bm{z}_{k})=\left\{
	\begin{aligned}
		&\bm{S}_{X}^{\{k\}}\left(v^{\{k,{p}_{k}\}},e^{\{k,{p}_{k}\}}\right)({z}_{k}),
		&(k\in\mathcal{K});\\
		&\bm{H}\left(d_{0}^{\{k,{p}_{k}\}},d_{1}^{\{k,{p}_{k}\}},d_{2}^{\{k,{p}_{k}\}}\right), &(k\notin\mathcal{K}).
	\end{aligned}\right.
\end{equation}
where for $ k\in\mathcal{K}$, the symbol $X\in\{L,R\}$, $\bm{S}_{L}^{\{k\}}\left(v,e\right)({z}_{k})=\bm{i}^{e}\cdot\bm{j}^{v}\cdot\bm{S}^{\{k\}}({z}_{k})$, $
\bm{S}_{R}^{\{k\}}\left(v,e\right)({z}_{k})=\bm{S}^{\{k\}}({z}_{k})\cdot(-\bm{j})^{v}\cdot\bm{i}^{e}$,
and $\bm{S}^{\{k\}}({z}_{k})$ are cross-over PU introduced in Definition \ref{def: Cross-over PU}, $e^{\{k,{p}_{k}\}}\in\Z_{4}$, ${v}^{\{k,{p}_{k}\}}\in\mathbf{Z}_{2}$;
for $ k\notin\mathcal{K}$, 
$\left({d_{0}^{\{k,{p}_{k}\}}}, {d_{1}^{\{k,{p}_{k}\}}}, {d_{2}^{\{k,{p}_{k}\}}}\right)\in\Z_{4}^{3}$ which satisfy
$2{d_{0}^{\{k,{p}_{k}\}}}+{d_{1}^{\{k,{p}_{k}\}}}+{d_{2}^{\{k,{p}_{k}\}}}=0$.
Based on Theorem \ref{thm: QAM_CCA}, we get the following corollary.
\begin{corollary}[Non-standard Golay array matrix over QAM]\label{Coro: NS_GCA_QAM}
The array matrix described by 
$\widetilde{\mathbb{M}}(\bm{x},\bm{y})=
\left\{{f}_{i,j}^{(p)}(\bm{x},\bm{y})\middle|0\leq{i,j}\leq1;0\leq{p}<q\right\}$
is a Golay array matrix over $4^{q}$-QAM, if
\begin{equation}\label{eq: f(x,y)_QAM_NS_GAM}
{f}_{i,j}^{(p)}(\bm{x},\bm{y})=\sum_{k=0}^{n}{f}^{\{k,{p}_{k}\}},
\end{equation}
where
\begin{equation}
{f}^{\{k,{p}_{k}\}}
=\left\{
\begin{aligned}
&{f}_{X}^{\{k\}}[{v}^{\{k,{p}_{k}\}},{e}^{\{k,{p}_{k}\}}]({x}_{k},{x}_{k+1},{y}_{k}),&(k\in\mathcal{K});\\
&{f}[d_{0}^{\{k,{p}_{k}\}},d_{1}^{\{k,{p}_{k}\}},d_{2}^{\{k,{p}_{k}\}}](x_{k},x_{k+1}), &(k\notin\mathcal{K}),
\end{aligned}\right.
\end{equation}
where 
\begin{equation}
{f}[d_{0},d_{1},d_{2}](x_{1},x_{2})=2x_{1}x_{2}+d_{0}+d_{1}x_{1}+d_{2}x_{2},
\end{equation}
and for $k\in\mathcal{K}$,
\begin{equation}
\begin{split}
&{f}_{L}^{\{k,p_{k}\}}[v,e]({x}_{1},{x}_{2},{y})\\
&={s}_{v,0}^{\{k\}}({y})\cdot\bar{x}_{1}\bar{x}_{2}
+{s}_{v,1}^{\{k\}}({y})\cdot{x}_{1}\bar{x}_{2}+{s}_{\bar{v},0}^{\{k\}}({y})\cdot\bar{x}_{1}{x}_{2}
+{s}_{\bar{v},1}^{\{k\}}({y})\cdot{x}_{1}{x}_{2}\\
&+(2v-2e)\cdot{x}_{1}+e,
\end{split}
\end{equation}
\begin{equation}
\begin{split}
&{f}_{R}^{\{k,p_{k}\}}[v,e]({x}_{1},{x}_{2},{y})\\
&={s}_{0,v}^{\{k\}}({y})\cdot\bar{x}_{1}\bar{x}_{2}
+{s}_{0,\bar{v}}^{\{k\}}({y})\cdot{x}_{1}\bar{x}_{2}+{s}_{1,{v}}^{\{k\}}({y})\cdot\bar{x}_{1}{x}_{2}
+{s}_{1,\bar{v}}^{\{k\}}({y})\cdot{x}_{1}{x}_{2}\\
&+(2v-2e)\cdot{x}_{2}+e, 
\end{split}
\end{equation}
where ${p}_{k}$ is decided by \eqref{eq: p=[rho(p)]}, where $x_{0}=i$ and $x_{n+1}=j$ for simplicity.
\end{corollary}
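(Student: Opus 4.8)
The plan is to reduce this corollary entirely to an application of Theorem~\ref{thm: QAM_CCA}, so the real work is just to verify that the specific choices of component matrices in \eqref{eq: U^{k}} satisfy the hypotheses of that theorem and that substituting their function matrices into \eqref{eq: f^(p)(x)} yields exactly \eqref{eq: f(x,y)_QAM_NS_GAM}. First I would check the hypothesis \eqref{eq: QAM-U}: for each $k$ we need $\bm{U}^{\{k\}}(\bm{z}_k)=\sum_{0\le p_k<q_k}2^{(q_k-1-p_k)Q_k}\bm{U}^{\{k,p_k\}}(\bm{z}_k)$ to be a PU matrix. For $k\notin\mathcal{K}$ the summands are the Hadamard matrices $\bm{H}(d_0^{\{k,p_k\}},d_1^{\{k,p_k\}},d_2^{\{k,p_k\}})$ with $2d_0+d_1+d_2=0$, which are compatible by Lemma~\ref{lem: Compatible H}, so any real-weighted sum (in particular the weighted sum with powers of $2$) is unitary, hence PU. For $k\in\mathcal{K}$ the summands are $\bm{S}_X^{\{k\}}(v,e)(z_k)$, i.e.\ $\bm{i}^e\bm{j}^v\bm{S}^{\{k\}}(z_k)$ (or the right-handed variant), which are compatible PU matrices by Lemma~\ref{lem: Compatible PU} (with $\bm{U}(\bm z)=\bm{S}^{\{k\}}(z_k)$, itself PU by Definition~\ref{def: Cross-over PU} and Lemma~\ref{lem: GAP->GAM}); again a real-weighted sum is PU. So \eqref{eq: QAM-U} holds for every $k$, and Theorem~\ref{thm: QAM_CCA} applies.

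The second step is bookkeeping of function matrices. Theorem~\ref{thm: QAM_CCA} gives
$f^{(p)}_{i,j}(\bm x,\bm y)=\sum_{k=0}^n\bigl(f_{0,0}^{\{k,p_k\}}(\bm y_k)\bar x_k\bar x_{k+1}+f_{1,0}^{\{k,p_k\}}(\bm y_k)x_k\bar x_{k+1}+f_{0,1}^{\{k,p_k\}}(\bm y_k)\bar x_k x_{k+1}+f_{1,1}^{\{k,p_k\}}(\bm y_k)x_k x_{k+1}\bigr)$,
so I only need the $(i,j)$-entry functions of each $\bm{U}^{\{k,p_k\}}$. For $k\notin\mathcal{K}$ these come from \eqref{eqn_H(ddd)} with $H=4$: the entries are $d_0$, $d_0+d_2$, $d_0+d_1$, $d_0+d_1+d_2+2$, and plugging these four values into the $\bar x\bar x,\,x\bar x,\,\bar x x,\,xx$ template and collecting terms gives precisely $2x_kx_{k+1}+d_0+d_1x_k+d_2x_{k+1}$, i.e.\ the claimed $f[d_0,d_1,d_2](x_k,x_{k+1})$ (using $x\bar x+xx=x$ and $\bar x x+xx=x_{k+1}$ etc., and $\bar x_k\bar x_{k+1}+x_k\bar x_{k+1}+\bar x_k x_{k+1}+x_kx_{k+1}=1$). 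For $k\in\mathcal{K}$, the function matrices $\widetilde{\bm{U}}_L[v,e](\bm y)$ and $\widetilde{\bm{U}}_R[v,e](\bm y)$ are given explicitly by \eqref{eq: fun_U_L[v,e]} and \eqref{eq: fun_U_R[v,e]} in terms of the cross-over entries $s^{\{k\}}_{i,j}(y_k)$; substituting those four entries into the template and simplifying the $e\begin{bmatrix}1&1\\-1&-1\end{bmatrix}$ (resp.\ $e\begin{bmatrix}1&-1\\1&-1\end{bmatrix}$) and $v$-offset parts gives the $(2v-2e)x_1+e$ correction term stated in $f_L^{\{k,p_k\}}[v,e]$ and $f_R^{\{k,p_k\}}[v,e]$. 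The indices $p_k$ attached to the parameters $v^{\{k,p_k\}}$, $e^{\{k,p_k\}}$, $d_\ell^{\{k,p_k\}}$ are exactly the mixed-radix digits of $p$ from \eqref{eq: p=[rho(p)]}, so this matches \eqref{eq: f(x,y)_QAM_NS_GAM}.

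The one genuinely non-trivial point — and the step I'd expect to be the main obstacle to write cleanly rather than deep — is confirming that $\sum_{0\le p_k<q_k}2^{(q_k-1-p_k)Q_k}\bm{U}^{\{k,p_k\}}(\bm z_k)$ is PU with the *stated* power-of-two weights and that the resulting product $\bm M(\bm z,\bm z')$ genuinely lies in the $4^q$-QAM alphabet rather than merely in some lattice: this is where the ordered factorization $q=q_0q_1\cdots q_n$, the bases $Q_k={Q}_{\phi(k)}=\prod_{i<k}q_{\phi(i)}$, and the identity $\mathbf S_{4^q\text{-QAM}}=\bigotimes_{k}\bigl\{\bigoplus_{i_k=0}^{q_k-1}2^{(q_k-1-i_k)Q_k}\cdot S_{\text{QPSK}}\bigr\}$ must be invoked exactly as in the discussion preceding Theorem~\ref{thm: QAM_CCA}. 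Once that alphabet identity is cited, compatibility (Lemmas~\ref{lem: Compatible H} and \ref{lem: Compatible PU}) handles the PU property for arbitrary real weights and in particular for these, and Theorem~\ref{thm: QAM_CCA} closes the argument; everything else is the routine substitution sketched above, so I would present only the simplification of the $k\in\mathcal{K}$ entries in detail and leave the $k\notin\mathcal{K}$ case to the reader as in the proof of Corollary~\ref{Coro: NS_GCA_QPSK}.
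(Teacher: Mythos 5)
Your proposal is correct and takes essentially the same route as the paper: the paper's proof is exactly the reduction to Theorem~\ref{thm: QAM_CCA} followed by substituting the function matrices \eqref{eq: fun_U_L[v,e]}, \eqref{eq: fun_U_R[v,e]} and \eqref{eqn_H(ddd)} into \eqref{eq: f^(p)(x)}, just as in Corollary~\ref{Coro: NS_GCA_QPSK}. The compatibility/alphabet verification you spell out via Lemmas~\ref{lem: Compatible H} and \ref{lem: Compatible PU} is handled in the paper by the discussion surrounding \eqref{eq: U^{k}} before the corollary rather than inside its proof, so it is the same argument organized slightly differently.
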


\begin{proof}	
Similar to the proof of Corollary \ref{Coro: NS_GCA_QPSK}, \eqref{eq: f(x,y)_QAM_NS_GAM} can be obtained by substituting $\widetilde{\bm{U}}^{\{k,p_{k}\}}(\bm{y}_{k})=\widetilde{\bm{S}}^{\{k\}}[{v}^{\{k,{p}_{k}\}},{e}^{\{k,{p}_{k}\}}](y_{k})$ (in the form of \eqref{eq: fun_U_L[v,e]}) or $\widetilde{\bm{S}}'^{\{k\}}[{v}^{\{k,{p}_{k}\}},{e}^{\{k,{p}_{k}\}}](y_{k})$ (in the form of \eqref{eq: fun_U_R[v,e]}) for $k\in\mathcal{K}$ and $\widetilde{\bm{H}}\left({d_{0}^{\{k,{p}_{k}\}}}, {d_{1}^{\{k,{p}_{k}\}}}, {d_{2}^{\{k,{p}_{k}\}}}\right)$ (in the form of \eqref{eqn_H(ddd)}) for $k\notin\mathcal{K}$ to \eqref{eq: f^(p)(x)}.
	
\end{proof}
\begin{remark}
If $ {r}=0 $, the result is equivalent to Theorem 1 in \cite{fullpaper}.
\end{remark}

\section{A Lower Bound of Enumeration}\label{Sec: Enumerations}
We will give the enumeration based on the ``three stage process''.

\subsection{Enumeration of Non-standard Golay array matrix over QAM}
In Corollary \ref{Coro: NS_GCA_QAM},
if we set 
$e^{\{k,0\}}=0$, $v^{\{k,0\}}=0$ ($ k\in\mathcal{K}$),
$\left({d_{0}^{\{k,0\}}}, {d_{1}^{\{k,0\}}}, {d_{2}^{\{k,0\}}}\right)\in(0,0,0)$
($ k\notin\mathcal{K}$), then $f_{i,j}^{(0)}(\bm{x},\bm{y})$ is same as \eqref{eq: f(x,y)_non-standard}.
Denote $\mu_{i,j}^{(p)}(\bm{x},\bm{y})=f_{i,j}^{(p)}(\bm{x},\bm{y})-f_{i,j}^{(0)}(\bm{x},\bm{y})$ as the offset, then,
\begin{equation}
\mu_{i,j}^{(p)}(\bm{x},\bm{y})=\sum_{k=0}^{n}\mu^{\{k,{p}_{k}\}},
\end{equation}
where
\begin{equation}
	\mu^{\{k,{p}_{k}\}}
	=\left\{
	\begin{aligned}
		&{\mu}_{X}^{\{k\}}[{v}^{\{k,{p}_{k}\}},{e}^{\{k,{p}_{k}\}}]({x}_{k},{x}_{k+1},{y}_{k}),&(k\in\mathcal{K});\\
		&{\mu}[d_{0}^{\{k,{p}_{k}\}},d_{1}^{\{k,{p}_{k}\}},d_{2}^{\{k,{p}_{k}\}}](x_{k},x_{k+1}), &(k\notin\mathcal{K}),
	\end{aligned}\right.
\end{equation}
where 
\begin{equation}
{\mu}[d_{0},d_{1},d_{2}](x_{1},x_{2})=d_{0}+d_{1}x_{1}+d_{2}x_{2},
\end{equation}
and
\begin{equation}
	\begin{split}
		{\mu}_{L}^{\{k\}}[v,e]({x}_{1},{x}_{2},{y})
		&={f}_{L}^{\{k\}}[v,e]({x}_{1},{x}_{2},{y})-{f}_{L}^{\{k\}}[0,0]({x}_{1},{x}_{2},{y})\\
		&=\left(\left({s}_{1,0}^{\{k\}}({y})-{s}_{0,0}^{\{k\}}({y})\right)\cdot\bar{x}_{1}+\left({s}_{1,1}^{\{k\}}({y})-{s}_{0,1}^{\{k\}}({y})\right)\cdot{x}_{1}\right)(1-2x_{2})\cdot{v}
		\\
		&+(2v-2e)\cdot{x}_{1}+e,
\end{split}
\end{equation}
\begin{equation}
\begin{split}
{\mu}_{R}^{\{k,p_{k}\}}[v,e]({x}_{1},{x}_{2},{y})&={f}_{R}^{\{k\}}[v,e]({x}_{1},{x}_{2},{y})-{f}_{R}^{\{k\}}[0,0]({x}_{1},{x}_{2},{y})\\&=\left(\left({s}_{0,1}^{\{k\}}({y})-{s}_{0,0}^{\{k\}}({y})\right)\cdot\bar{x}_{2}+\left({s}_{1,1}^{\{k\}}({y})-{s}_{1,0}^{\{k\}}({y})\right)\cdot{x}_{2}\right)(1-2x_{1})\cdot{v}\\
&+(2v-2e)\cdot{x}_{2}+e.\end{split}
\end{equation}

The $4^{q}$-QAM Golay array matrix is decided by $f_{i,j}^{(0)}(\bm{x},\bm{y})$ and $\vec{\mu}(\bm{x},\bm{y})=\left({\mu}^{(p)}(\bm{x},\bm{y})\middle|0\leq{p}<q\right)$.

Similar to \cite{Array2}, for fixed $(i,j)$, the number of $f_{i,j}^{(0)}(\bm{x},\bm{y})$ (refer to \eqref{eq: f(x,y)_non-standard}) is  determined by the choice of indices $\mathcal{K}$ of the $r=|\mathcal{K}|$ cross-over matrices, and by which of the $8$ possible values in $P$ each cross-over matrix takes. Therefore the number of allowed values for this sequence of pairs is 
\begin{equation}
\#\{f_{i,j}^{(0)}(\bm{x},\bm{y})\}=\binom{n+1}{r}8^{r}.
\end{equation}

Denote $\vec{v}^{\{k,{i}_{k}\}}$ ($(0\le k\le n,0\le{i}<q_{k})$) as $q$-dimensional vector consist of $0$ and $1$,
where the $p$-th entry equals $1$ if $p_{k}={i}_{k}$ and equals $0$ otherwise.
For example, suppose $n=1$, $q=6=2\times3$, then 
\[
\begin{array}{ccc}
	\vec{v}^{\{0,0\}}=(1,0,1,0,1,0),& \vec{v}^{\{0,1\}}=(0,1,0,1,0,1),&\\
	\vec{v}^{\{1,0\}}=(1,1,0,0,0,0),&
	\vec{v}^{\{1,1\}}=(0,0,1,1,0,0),&
	\vec{v}^{\{1,2\}}=(0,0,0,0,1,1).
\end{array}
\]
The subset $B={\{\vec{v}^{\{k,{i}_{k}\}}|0\le k\le n,0<{i}_{k}<q_{k}\}}$  (note $\vec{v}^{\{k,0\}}$ $(0\le k\le n)$ are excluded) are linearly independent.
(This can be easily verified. For example, the ${i}_{k}{Q}_{k}$-th entry of every vectors in $B$ is $0$ except for $\vec{v}^{\{k,{i}_{k}\}}$, which means $\vec{v}^{\{k,{i}_{k}\}}$ can not be given by linear summations of other vectors in $B$.)

The vectorial offset  $\vec{\mu}(\bm{x},\bm{y})$ can be viewed as a $(\sum_{k=0}^{n}{q}_{k}-n)$-dimensional function space spanned by $B$ over ${\mu}^{\{k,p_{k}\}}$, i.e.,
\begin{equation}
\vec{\mu}_{i,j}(\bm{x},\bm{y})=\sum_{k=0}^{n}\sum_{{p}_{k}=1}^{{q}_{k}}{\mu}^{\{k,p_{k}\}}\cdot\vec{v}^{\{k,p_{k}\}}.
\end{equation}

For the ordered factorization of $q=q_0\times q_1\times\cdots\times q_n$,
as $e^{\{k,{p}_{k}\}}\in\Z_{4}$, ${v}^{\{k,{p}_{k}\}}\in\mathbf{Z}_{2}$ ($ k\in\mathcal{K}$), 
$\left({d_{0}^{\{k,{p}_{k}\}}}, {d_{1}^{\{k,{p}_{k}\}}}, {d_{2}^{\{k,{p}_{k}\}}}\right)\in\Z_{4}^{3}$ ($ k\notin\mathcal{K}$) runs for all their possible values, the number of vectorial offset is 
\begin{equation}
\#\{\vec{\mu}_{i,j}(\bm{x},\bm{y})\}=\prod_{k=0}^{m}\prod_{{p}_{k}=1}^{{q}_{k}}
\#\{{\mu}_{i,j}^{\{k,p_{k}\}}\}=\prod_{k\in\mathcal{K}}8^{{q}_{k}-1}\cdot\prod_{k\notin\mathcal{K}}16^{{q}_{k}-1}.
\end{equation}

\subsection{Enumeration of Affine Offset}

In this section, the constructed arrays in Corollary \ref{Coro: NS_GCA_QAM} are of size  
$\bm{2}^{(n)}\times\bm{8}^{(r)}$.
Thus the affine function ${l}(\bm{x},\bm{y})$ can  be given by
\begin{equation}\label{eq: l(x,y)}
	{{l}(\bm{x},\bm{y})}
	=\sum_{k=1}^{n}{c}_{k}{x}_{k}+\sum_{k\in\mathcal{K}}{c}'_{k}{y}_{k}+{c}_{0},
\end{equation}
where ${c}_{k}\in\Z_{4}$ ($0\le{k}\le{n}$), ${c}'_{k}\in\Z_{4}$ ($k\in\mathcal{K}$),
which determines
\begin{equation}
	\#\{{l}(\bm{x},\bm{y})\}=
	4^{n+r+1}.
\end{equation}

\subsection{Enumeration of Projections}

The projected sequences form  Golay sequence matrices over $4^{q}$-QAM of sequence length $2^{n+3r}$
by restricting $({x}_1,\dots,{x}_{n},{y}_{k_1},{y}_{k_2},\dots,{y}_{k_r})$ to be mixed radix representation of $y$ based on radixes 
$(2,\dots,{2},8,8,\dots,8)$ and permutation $\pi$.
Denote the number of projections from array to sequences based on the permutation $\pi$ by $\#\{\pi\}$. Then
\begin{equation}
	\#\{\pi\}=(n+r)!.
\end{equation}

Similar the the analysis of QPSK case in \cite{Array2}, $
\left\{{f}_{i,j}^{(p)}(\bm{x},\bm{y})\middle|0\leq{i,j}\leq1;0\leq{p}<q\right\}$ is invariant under the mapping 
\begin{align*}
&{x}_{k}\mapsto{x}_{n+1-k},\quad{c}_{k}\mapsto{c}_{n+1-k} \qquad\text{for}\quad 1\leq{k}\leq{n};\\
&\left(d_{0}^{\{k,{p}_{k}\}},d_{1}^{\{k,{p}_{k}\}},d_{2}^{\{k,{p}_{k}\}}\right)\mapsto
\left(d_{0}^{\{n+1-k,{p}_{n+1-k}\}},d_{2}^{\{n+1-k,{p}_{n+1-k}\}},d_{1}^{\{n+1-k,{p}_{n+1-k}\}}\right)\quad\text{for}\quad 1\leq{k}\leq{n};\\
&{s}_{i,j}^{\{k\}}(y_{k})\mapsto{s}_{1-i,1-j}^{\{n-k\}}(y_{n-k}),\quad{c}'_{k}\mapsto{c}'_{n-k}\qquad\text{for}\quad {k}\in\mathcal{K};\\
&\left({v}^{\{k,{p}_{k}\}},{e}^{\{k,{p}_{k}\}}\right)\mapsto\left({v}^{\{n-k,{p}_{n-k}\}},{e}^{\{n-k,{p}_{n-k}\}}\right), {R}\mapsto{L},{L}\mapsto{R}\qquad\text{for}\quad {k}\in\mathcal{K};
\end{align*}
which will cause a repetition.

So for the fixed mixed radix representation of $0\leq{p}<q$ based on the ordered factorization $q=q_0\times q_1\times\cdots\times q_n$, (this is why we call it a lower bound of enumeration,) the total number of derived sequences is 
\begin{equation}
\begin{split}
\#\left\{{f}_{i,j}^{(p)}(\bm{x},\bm{y})|0\leq{i,j}\leq1;0\leq{p}<q\right\}&=\frac{1}{2}\#\{f_{i,j}^{(0)}(\bm{x},\bm{y})\}\cdot\#\{\vec{\mu}_{i,j}(\bm{x},\bm{y})\}\cdot\#\{{l}(\bm{x},\bm{y})\}\cdot\#\{\pi\}\\
&=\frac{1}{2}\binom{n+1}{r}8^{r}\cdot\prod_{k\in\mathcal{K}}8^{{q}_{k}-1}\cdot\prod_{k\notin\mathcal{K}}16^{{q}_{k}-1}\cdot
4^{n+r+1}\cdot(n+r)!.
\end{split}
\end{equation}

\section{Concluding Remarks}\label{Sec: Concluding Remarks}
In this paper, we generalized the  multi-dimensional (or array) construction approach method into $4^{q}$-QAM constellation. 
Our work partly solved the open questions left by\cite{Array2}:

{\em How can the three-stage construction process of this paper be used to simplify or extend known results on the construction of Golay sequences in other contexts, such as 16-QAM modulation, a ternary alphabet $\{1, 0,−1\}$, or quaternary sequences whose length is not a power of $2$? }

There are some points should be noticed:
\begin{itemize}
\item [1]
We use the Para-unitary (PU) matrix to construct the Golay array pairs in terms of generating functions and extract corresponding function forms. 
The QPSK case is given in Theorem \ref{thm: PSK_CCA}, which is in agreement with the result in  \cite[Theorem 7, Corollary 10]{Array2}.
The QAM case is given in Theorems \ref{thm: QAM_CCA}. In particular, if $r=0$, i.e., no ``cross-over" PU matrices.
This makes the process of constructing array over QAM constellation more clear.
	\item[2]
	The construction over $4^{q}$-QAM constellation is based on the factorization of $q$ which is presented as the weighted sum of generating matrices over QPSK.
	The final results are presented in $4$-ary $q$-dimensional functions.
	\item[3]
	We use the mixed radix representation to simplify the process of taking projections of the resulting Golay array (pairs) to lower dimensions.
\end{itemize}

%
%



\begin{thebibliography}{00}
   	%
   	\bibitem{Budi2018PU}
   	S.~Z.~Budi\v{s}in and P.~Spasojevi$\acute{c}$,
   	\newblock \lq\lq Paraunitary-based Boolean generator for QAM complementary sequences of length $2^K$,\rq\rq \
   	{\em IEEE Trans. Inf. Theory},
   	vol. 64, no. 8, pp. 5938--5956, Aug. 2018.
   	%
   	%
   	\bibitem{Butson62}
   	A.~T.~Butson,
   	\newblock \lq\lq Generalized Hadamard matrices.\rq\rq
   	\newblock in {\em Proc. Amer. Math.
   		Soc.}, vol. 13, no. 6 pp. 894--898, 1962.
   	
   	%
   	\bibitem{Chong2003A}
   	C.~V. Chong, R.~Venkataramani, and V.~Tarokh, \lq\lq A new construction of 16-QAM
   	Golay complementary sequences, \rq\rq  \newblock {\em  IEEE Trans. Inf. Theory}, vol.~49,
   	no.~11, pp. 2953--2959, 2003.
   	%
   	\bibitem{Chang2010New}
   	C.-Y. Chang, Y.~Li, and J.~Hirata,
   	\newblock\lq\lq New 64-QAM Golay complementary
   	sequences, \rq\rq
   	\newblock {\em  IEEE Trans. Inf. Theory}, vol.~56,
   	no.~5, pp. 2479--2485, 2010.
   	
   	\bibitem{Davis1999Peak}
   	J.~A. Davis and J.~Jedwab, \lq\lq Peak-to-mean power control in OFDM, Golay
   	complementary sequences, and Reed-Muller codes, \rq\rq  \newblock {\em  IEEE Trans. Inf. Theory},  vol.~45, no.~7, pp. 2397--2417, 1999.
   	
   	\bibitem{Fiedler06}
   	F.~Fiedler and J.~Jedwab,
   	\newblock \lq\lq How do more Golay sequences arise?\rq\rq
   	\newblock {\em  IEEE Trans. Inf. Theory}, vol. 52, no. 9, pp. 4261--4266, 2006.
   	%
   	\bibitem{Fiedler2008A}
   	F.~Fiedler, J.~Jedwab and  M.~G.~Parker,
   	\newblock \lq\lq A framework for the construction of Golay sequences,\rq\rq
   	\newblock {\em  IEEE Trans. Inf. Theory}, vol. 54, no. 7, pp.  3114--3129, 2008.
   	
   	\bibitem{Array2}
   	F.~Fiedler, J.~Jedwab and M.~G.~Parker, \lq\lq A multi-dimensional approach to the construction and enumeration of Golay complementary sequences,\rq\rq \ {\em J. Combin.
   		Theory (Series A),} vol. 115, no. 5, pp. 753--776, 2008.
   	
   	\bibitem{Golay1961Complementary}
   	M. J. E. Golay, \lq\lq Complementary series, \rq\rq\  {\em IRE Trans. Inf. Theory},
   	vol. 7, no. 2, pp. 82--87, 1961.
   	%
   	   	\bibitem{Array1} J.~Jedwab and M.~G.~Parker, \lq\lq Golay complementary array pairs, \rq\rq\  {\em Designs, Codes
   	   		and Cryptography,} vol. 44, pp.209--216, 2007.
   	   	
   	\bibitem{Lee2006A}
   	H.~Lee and S.~W. Golomb,
   	\newblock\lq\lq A new construction of 64-QAM Golay complementary
   	sequences, \rq\rq  \newblock {\em  IEEE Trans. Inf. Theory}, vol.~52, no.~4, pp. 1663--1670,
   	2006.
   	%
   	\bibitem{Li2008Comments}
   	Y.~Li,
   	\newblock\lq\lq Comments on \lq\lq A new construction of 16-QAM Golay complementary sequences\rq\rq
   	and extension for 64-QAM Golay sequences, \rq\rq  \newblock {\em  IEEE Trans. Inf. Theory}, vol.~54, no.~7, pp. 3246--3251, 2008.
   	
   	\bibitem{Li2010A}
   	Y.~Li, \lq\lq A construction of general QAM Golay complementary sequences, \rq\rq
   	\newblock {\em  IEEE Trans. Inf. Theory}, vol.~56, no.~11, pp.
   	5765--5771, 2010.
   	%
   	\bibitem{Li05}
   	Y.~Li and W.~B.~Chu,
   	\newblock \lq\lq More Golay sequences,\rq\rq
   	\newblock {\em  IEEE Trans. Inf. Theory}, vol. 51, no. 3, pp. 1141--1145, 2005.
   	%
   	\bibitem{Liu2013New}
   	Z.~Liu, Y.~Li, and Y.~L. Guan,
   	\newblock \lq\lq New constructions of general QAM Golay complementary sequences, \rq\rq
   	\newblock {\em  IEEE Trans. Inf. Theory},
   	vol.~59, no.~11, pp. 7684--7692, 2013.
   	
   	%
   	%
   	%
   	\bibitem{Robing2001A}
   	C.~R{\"o}\ss ing and V.~Tarokh,
   	\newblock \lq\lq A construction of OFDM 16-QAM sequences having low peak powers,\rq\rq
   	\newblock {\em  IEEE Trans. Inf. Theory},
   	vol.~47, no.~5, pp. 2091--2094, 2001.
   	
   	%
   	%
   	%
   	   	\bibitem{Tseng1972Complementary}
   	   	C.~C.~Tseng and C.~L.~Liu,
   	   	\newblock \lq\lq Complementary sets of sequences,\rq\rq
   	   	\newblock {\em  IEEE Trans. Inform. Theory}, vol. 18, no. 5, pp. 644--652, 1972.
   	   	
   	%
   	%
   	
   	
   	
   	\bibitem{CCA} Z.~Wang, D.~Ma, G.~Gong and E.~Xue  
   	\newblock \lq\lq  New construction of complementary sequence (or array) sets and complete complementary codes,\rq\rq 
   	\newblock {\em  IEEE Trans. Inf. Theory},
   	vol.~67, no.~7, pp. 4902--4928, 2021.
   	
   	\bibitem{fullpaper}Z.~Wang, E.~Xue and G.~Gong, 
   	\newblock \lq\lq New Constructions of Complementary Sequence Pairs over $4^q$-QAM,\rq\rq\
   	\newblock {\em  IEEE Trans. Inf. Theory},
   	vol.~68, no.~3, pp. 2067--2082, 2022.
   	
   	
   	
   	\bibitem{GolombBook} S.W. Golomb, {\it Shift Register Sequences,}
   	Holden-Day, Inc., San Francisco, 1967, revised edition, Aegean
   	Park Press, Laguna Hills, CA, (1982). Third edition,  2020.
   	
\bibitem{SETA2020} E.~Xue, Z.~Wang and G.~Gong 
\newblock \lq\lq A Construction of $4^{q}$-QAM Golay Complementary Pairs Based on Non-standard Golay Complementary Sequences,\rq\rq 
\newblock {\em  SETA2020}, 2020.

\bibitem{book_mixed_radix}
Donald Knuth, {\em The Art of Computer Programming, Volume.2: Seminumerical Algorithms}, Addison-Wesley, 1997.

\bibitem{Dymond1992}
M. Dymond, Barker {\em Arrays: Existence, Generalization and Alternatives}, PhD thesis, University of London, 1992.

\bibitem{R. Craigen_Discrete Math2002}
R. Craigen, W. Holzmann, H. Kharaghani, {\em Complex Golay sequences: Structure and applications}, Discrete Math. 252 (2002) 73–89.

\bibitem{R. Craigen_Combin2001}
R. Craigen, C. Koukouvinos, {\em A theory of ternary complementary pairs}, J. Combin. Theory Ser. A 96 (2001) 358– 375.

\bibitem{Paterson00} 
K.~G.~Paterson, \lq\lq Generalized Reed-Muller codes and power control in
OFDM modulation,\rq\rq \ {\em IEEE Trans. Inf. Theory}, vol. 46, no.
1, pp. 104--120, 2000.

\bibitem{newseed} 
F. Fiedler, J. Jedwab and A. Wiebe,\lq\lq A new source of seed pairs for Golay sequences of length 2m,\rq\rq\ {\em J. Combin.
Theory (Series A),}  vol. 117 pp. 589--597, 2010. 


\end{thebibliography}
\end{document}